\let\newfloat\newfloat@ltx
\DeclareMathOperator{\Tr}{Tr}
\newcommand{\meqref}[1]{Eq.~\eqref{#1}}
\newtheorem{theorem}{Theorem}
\newtheorem{definition}[theorem]{Definition}
\newtheorem{lemma}[theorem]{Lemma}
\newtheorem{proposition}[theorem]{Proposition}
\newtheorem{remark}[theorem]{Remark}
\newcommand{\bigO}[1]{\ensuremath{\mathsf{O}\left(#1\right)}}
\newcommand{\Hilbert}{\ensuremath{\mathbb{H}}}
\newcommand{\GE}{\ensuremath{E_G}} %
\newcommand{\GEm}{\ensuremath{E'_G}}
\newcommand{\trueGE}{\ensuremath{\hat{E}_G}} %
\newcommand{\GM}{GE} %
\newcommand{\AVG}{\ensuremath{\overline{E}_G}}
\newcommand{\STD}{\ensuremath{\mathrm{IQR}}_{\AVG}} %
\newcommand{\AVGm}{\AVG'}
\newcommand{\STDm}{\ensuremath{\mathrm{IQR}}_{\AVGm}} %
\newcommand{\TState}[2]{#1[\ensuremath{#2}]}
\newcommand{\GHZ}[1]{\TState{GHZ}{#1}}
\newcommand{\W}[1]{\TState{W}{#1}}
\newcommand{\Ring}[1]{\TState{Ring}{#1}}
\newcommand{\Random}[1]{\TState{Random}{#1}}
\newcommand{\tomo}[1]{\ensuremath{\mathcal{T}_{#1}}} %
\newcommand{\ketZ}{\ket{{\bf 0}}} %
\newcommand{\setEG}{\ensuremath{\mathcal{F}}}
\newcommand{\setEGk}{\ensuremath{\setEG_k}}
\newcommand{\med}[1]{\ensuremath{\text{m}(#1)}}
\newcommand{\medSetEGk}{\ensuremath{\med{\setEGk}}}
\newcommand{\setEGm}{\ensuremath{\mathcal{F}'}}
\newcommand{\setEGmk}{\ensuremath{\setEGm_k}}
\newcommand{\medSetEGmk}{\ensuremath{\med{\setEGmk}}}
\newcommand{\imi}{\imath}
\newcommand{\complMap}{\mathfrak{v}}
\begin{document}

\title{A quantum implementation of high-order power method for
estimating geometric entanglement of pure states}%

\author{Andrii Semenov}
\email{andrii.semenov@equal1.com}
\author{Niall Murphy}%
\email{niall.murphy@equal1.com}
\affiliation{Equal1 Labs, Dublin, Ireland}%

\author{Simone Patscheider}
\affiliation{Department of Mathematics, University of Trento, Trento, Italy}%
\author{Alessandra Bernardi}
\affiliation{Department of Mathematics, University of Trento, Trento, Italy}%
\author{Elena Blokhina}
\affiliation{Equal1 Labs, Dublin, Ireland}%
\affiliation{School of Electrical \& Electronic Engineering,
University College Dublin, Ireland}%

\date{\today}

\begin{abstract}
  Entanglement is one of the fundamental properties of a quantum
  state and is a crucial differentiator between classical and quantum
  computation.
  There are many ways to define entanglement and its measure,
  depending on the problem or application under consideration.
  Each of these measures may
  be computed or approximated by multiple methods.
  However, hardly any of these methods can be run on near-term quantum hardware.
  This work presents a quantum adaptation of the iterative
  high-order power method for
  estimating the geometric measure of entanglement of multi-qubit
  pure states using  rank-1 tensor approximation.
  This method is executable on early fault-tolerant (hybrid) quantum hardware and
  does not depend on quantum memory.
  We simulate this algorithm 
  and mitigate the effects of noise on the results of the computation
  using a theoretical model based on a known mitigation approach, which assumes a global depolarising noise channel.
\end{abstract}

\maketitle

\section{Introduction}
\label{sec:introduction}

For a quantum algorithm to have an advantage over a classical alternative,
entanglement~\cite{Horodecki2009} is a necessary ingredient.
Not only the presence, but the ``degree''
of entanglement in a quantum state is an important property for many
applications~\cite{Wootters1998a}, for example quantum information
technologies~\cite{Guehne2009}, quantum teleportation and quantum
communication~\cite{Horodecki2009} and quantum
cryptography~\cite{Broadbent2016}.
On the one hand, if there is not ``enough'' entanglement, a quantum
circuit can be efficiently simulated by classical devices~\cite{Vidal2003}.
On the other hand, some systems with ``maximally'' entangled states,
such as stabiliser codes~\cite{Gottesman1998a},
admit efficient classical simulations~\cite{Gottesman1998,Aaronson2004}.
In addition, highly entangled states are not useful as computational
resources in the measurement-based computing paradigm~\cite{Gross2009}.
In the setting of parameterised quantum circuits, entanglement contributes to
``barren plateaus'' in the cost function landscape that make training
a challenge~\cite{Patti2021,Marrero2021}.

Consequently, there has been an effort to define and
quantify~\cite{Wootters1998a}
entanglement, with different methods~\cite{Georgiev2022} being
preferred depending on the application.
Some definitions (for example, quantum mutual information or von
Neumann entropy~\cite{Horodecki2009})
are measures of bipartite entanglement only and are computationally
expensive for mixed states.
Others, such as concurrence~\cite{Wootters1998}, have no unique
definition for higher ($>2$) dimensional systems~\cite{Love2007,Plbnio2007}.
In fact, the very concept of $n$-partite (multi-qubit) entanglement
is still not clearly understood~\cite{Duer2000}.

The geometric measure of entanglement
($\GE$)~\cite{Shimony1995,Barnum2001,Wei2003}
is a popular multipartite entanglement metric with a clear geometric
interpretation,
which naturally extends to the $n$-partite case and to mixed
states~\cite{Wei2003}.
Geometric entanglement is, among other applications, useful when
defining entanglement witnesses~\cite{Wei2003} which are used
as a positive indicator of entanglement for a subset of quantum states.

The first algorithm to compute $\GE$ used it (under the name of
``the Groverian measure of entanglement'') to assess the probability of success
for an initial state in Grover's algorithm~\cite{Shimoni2005,Most2010}.
This method was rephrased in terms of eigenvalues and Singular Value
Decomposition (SVD) and extended to mixed states~\cite{Streltsov2011}.
A similar method was formulated in terms of Tucker decomposition~\cite{Teng2017}
using the Higher Order Orthogonal Iteration (HOOI)
algorithm~\cite{DeLathauwer2000}.
A parallel strand of work~\cite{Hayashi2009,Ni2014}, based on the
fundamental connection between
geometric entanglement and tensor theory, introduced a series of
algorithms to approximate \GE~\cite{Hu2016,Qi2018,Zhang2020a}.
These approaches can be considered as generalisations of the popular
power methods (e.g.\,the Lanczos method and Arnoldi iterations) used
to study the eigenvalues of matrices (for example,
see~\cite{ChatelinEigenMatrix}).

With accessible quantum hardware, measuring entanglement on a
physical device becomes a problem of interest.
The algorithms mentioned above are all intended to be executed on classical
machines. Na\"{i}vely, we would have to do full-state tomography to
reconstruct the quantum state in a classical computer to calculate
its entanglement.
As the number of qubits increases, the number of measurements
required for tomography increases
exponentially. %
Alternatively, if we know the quantum state in advance, we can
prepare the appropriate entanglement witness~\cite{Guehne2009}.
However, the latter method does not allow one to obtain the measure
of entanglement itself.
Recently, variational quantum circuits (VQC) have been suggested to
compute geometric
entanglement of pure states on a quantum
computer~\cite{MunozMoller2022, Consiglio2022}.
VQC algorithms, unfortunately, suffer from ``barren
plateaus''~\cite{Patti2021,Marrero2021} which hinders the scalability
of the approach.

In this paper, we present an iterative quantum algorithm for
computing the geometric entanglement of pure quantum states.
The algorithm is a quantum adaptation of the High Order Power
Method (HOPM)~\cite{DeLathauwer2000} to find solutions for Rank-1
Tensor Approximation (RTA).
We show how to implement crucial steps of HOPM in the quantum domain
and analyse their robustness w.r.t.\@ noise.
Our aim is to measure entanglement on near-term quantum devices more
time efficiently than full-state tomography and more space
efficiently (qubits versus classical memory) than executing HOPM on a
classical device.

Our paper is structured as follows. First, we formally define
geometric entanglement and its connection to RTA and recall the HOPM
algorithm for this problem.
Next, we present our quantum implementation of HOPM\@.
Then we show some simulation results exploring the robustness of the
algorithm to noise and discuss some simple ways to mitigate it.
Finally, we discuss some future directions and unresolved questions.

\section{Preliminaries: Geometric Entanglement in terms of rank-1
tensor approximation}
\label{sec:preliminaries}
\begin{definition}[Fully separable and entangled states]
  \label{def:seperable}
  Let $\mathbb{H}^n$ be an $n$-qubit Hilbert space.
  A pure $n$-partite state $\ket{\phi} \in \Hilbert^n$ is \emph{fully
  separable} if and only
  if it is a product state of 1-qubit states $\ket{{\bf v}_i} \in
  \Hilbert$~\cite{Horodecki2009}: %
  \begin{equation}
    \label{eqn:seperable}
    \ket{\phi} = \ket{{\bf v}_1} \otimes \cdots \otimes \ket{{\bf v}_n}.
  \end{equation}
  We say an $n$-partite pure state is \emph{entangled} if it is not
  fully separable.
  Let $S_n \subseteq \Hilbert^n$ denote the set of fully separable states.
\end{definition}
\vspace{1em}
\begin{definition}[Geometric measure of
  entanglement~\cite{Shimony1995,Barnum2001,Wei2003}]
  \label{def:geometric_entanglement}
  Let the \emph{geometric measure of entanglement}
  of a pure state $\ket{\psi}$ be
  \begin{equation}\label{eqn:gm_def}
    \trueGE(\ket{\psi})=1 - \max_{\ket{\phi} \in S_n}|\langle \phi |
    \psi \rangle|^2,
  \end{equation}
  where $\hat{\lambda} = \max\limits_{\ket{\phi} \in S_n}|\langle
  \phi | \psi \rangle|$ is called the \emph{entanglement eigenvalue}.
\end{definition}

One approach to compute the value of $\hat{\lambda}$ is to solve the
problem of finding the ``closest'' fully separable state $\ket{\phi}$
to the state $\ket{\psi}$.
The closest separable state is formally stated as a minimization
problem over $\ket{\phi} \in S_n$ with the objective function being
the distance between $\ket{\psi}$ and $\ket{\phi}$:
\begin{equation}
  \label{eqn:gm_obj_func}
  \min_{\ket{\phi} \in S_n} d(\ket{\psi}, \ket{\phi})^2,
\end{equation}
where $d(a,b)=\|a-b\|_F$ is the reference distance with $a,b\in
\Hilbert^n$ and $\|\cdot\|_F$ being the Frobenius norm (we will omit
the subscript $F$ in what follows). We note that the minimiser of
\meqref{eqn:gm_obj_func} exists (but may not be unique) since the set
of fully separable states $S_n$ is the classical Segre
variety~\cite{Bernardi2018, cirici2020characterization,QPG}, which is
closed in Zariski and Euclidean topology when defined over the
complex numbers. Minimizing the distance from a variety is a problem
well studied from different perspectives, for example in the context
of algebraic geometry the concept of Euclidean Distance Degree has
been introduced~\cite{Draisma201699}.
In the present work we deal with the
minimization of Eq.~\eqref{eqn:gm_obj_func} in terms of a system of
nonlinear equations with $\lambda$ being a Lagrange multiplier
(see~\cite{Ni2014,Wei2003,Zhang2020a}):
\begin{equation}\label{eqn:nonlinear-system}
  \begin{aligned}
    & T_\psi \times_{{i}} \left({\bf v}_1^\ast,\ldots, {{\bf
      v}_{i-1}^\ast}, {\bf v}_{i+1}^\ast, \ldots, {\bf v}_n^\ast
    \right) = \lambda {\bf v}_i, \\
    & T_\psi^* \times_{{i}} \left({{\bf v}_1},\ldots, {{\bf
    v}_{i-1}}, {{\bf v}_{i+1}}, \ldots, {{\bf v}_n} \right)=\lambda
    {\bf v}_i^\ast,          \\
    & \| {\bf v}_i \|^2 = \sum_{b_i} |v_{i,b_i}|^2 = 1, \quad  i \in
    \{1,\ldots,n\}, \quad \lambda \in \mathbb{R},
  \end{aligned}
\end{equation}
where $T_\psi \in \mathbb{C}_2^{\otimes n}$ ($\mathbb{C}_2 =
  \mathbb{C}\times \mathbb{C}$ is a two-dimensional complex vector
space) is a tensor representation of $\ket{\psi}$ in the basis
$\ket{b_1,\ldots,b_n} = \otimes_i \ket{b_i}$ with components
$\psi_{b_1,\ldots,b_n} \in \mathbb{C}$  defined as:
\begin{equation*}\label{eqn:psi_phi-tensor-connection}
  \ket{\psi} = \sum_{b_1,\ldots,b_n} \psi_{b_1,\ldots,b_n} \ket{b_1,
  \ldots, b_n},
\end{equation*}
and ${\bf v}_i = (v_{i,0}, v_{i,1})^{\rm T} \in \mathbb{C}_2$ is a
vector of components of the one-qubit states $\ket{{\bf v}_i}$,
defined in \meqref{eqn:seperable}, in the same basis;
the asterisk means complex conjugate; 
the norm $\|\cdot\|$ is an $l^2$-norm;
the symbol $\times_{ i}$ is
$n$-mode vector product over all modes except the $i$-th one, which
is a contraction of a tensor on the left with the tuple of vectors on
the right, skipping the $i$-th index (mode) of the tensor (when the
subscript is not given none of the modes are skipped).
In tensor analysis, the problem of finding a solution to a system of
equations such as \meqref{eqn:nonlinear-system} is usually referred
to as the $U$-eigenpair problem for the tensor $T_\psi$.
One way to solve this problem is by reducing it to RTA~\cite{Ni2014,Zhang2020a}.

The problem of RTA is $\mathsf{NP}$-hard~\cite{Hillar2013} and so
exact or global techniques such
as homotopy continuation methods~\cite{Chen2016} may take exponential
time. %
Alternatively, approximate techniques may run quickly, but may return
a local minima.
Some of the well-known algorithms that can be used for approximating RTA are
Higher-Order Singular Value Decomposition (HOSVD)~\cite{DeLathauwer2000a},
HOPM and HOOI~\cite{DeLathauwer2000}.
In the particular case of tensors representing the pure quantum state
of a finite system of qubits, both HOSVD and HOOI simplify to the
HOPM algorithm, which is based upon the Alternating Least Squares
(ALS) approach~\cite{DeLathauwer2000,Kroonenberg1980} for solving
non-linear systems of equations.

The application of HOPM to the first sub-system of equations in
\meqref{eqn:nonlinear-system}
is presented in pseudocode in Algorithm~\ref{alg:hopm}.
(Note, that this is a known
technique~\cite{DeLathauwer2000,Streltsov2011,Ni2014,Qi2018,Zhang2020a}).

\begin{algorithm}
  \caption{HOPM algorithm for estimating RTA of $T_\psi$ with
  \label{alg:hopm}
  absolute accuracy $\epsilon$~\cite{DeLathauwer2000}.}
  \begin{algorithmic}[1]
    \Procedure{HOPM}{$T_\psi$, $\epsilon$}
    \State{Let
      $(\mathbf{v}_1^{(0)},\ldots,\mathbf{v}_n^{(0)})$,
      $\lambda^{(0)}$ be a tuple of some $n$ one-qubit state vectors
    and the corresponding entanglement eigenvalue.}
    \State{Initialise $k$ to $1$.}
    \While{\label{line:hopm_conv_cond}$\mid
    \lambda^{(k)}-\lambda^{(k-1)}\mid\ > \epsilon$}
    \For{$i \in {1,\ldots,n}$} \\
    \State{${\bf u}^{(k)}_{i} = T_\psi \times_{{i}} \left({{\bf
        v}^{(k)^\ast}_1},\ldots, {{\bf v}^{(k)^\ast}_{i-1}}, {\bf
      v}^{(k-1)^\ast}_{i+1}, \ldots, {\bf v}^{(k-1)^\ast}_n \right)
    $\label{line:hopm_v_update}} \\
    \State{${\bf v}^{(k)}_{i}  = {\bf u}^{(k)}_{i} / \Vert
    {\mathbf{u}}_i^{(k)} \Vert $\label{line:hopm_v_norm}} \\
    \EndFor
    \State{$\lambda^{(k)} = \left| T_\psi \times \left({{\bf
      v}^{(k)^\ast}_1}, \ldots, {\bf v}^{(k)^\ast}_n \right) \right|
    $\label{line:hopm_nmode_prod}} \\
    \State{increment $k$}
    \EndWhile
    \Return{$\lambda^{(k)}$ and
    $(\mathbf{v}_1^{(k)},\ldots,\mathbf{v}_n^{(k)})$}
    \EndProcedure
  \end{algorithmic}
\end{algorithm}

\begin{theorem}%
\label{thrm:hopm-conv}
    Let $\lambda^{(k)}$ and $({{\bf v}^{(k)}_1},\ldots, {\bf v}^{(k)}_n )$ be the output of Algorithm~\ref{alg:hopm} at the $k$th iteration for the given tensor representation $T_\psi \in \mathbb{C}_2^{\otimes n}$ of a state $\ket{\psi}$. Then the number of iterations $k$ needed to get to the desired accuracy $\epsilon = \hat{\lambda}-\lambda^{(k)}$ scales with $n$ as
    \begin{equation}\label{eqn:hopm-compl-conv-rate-psi}
        \bigO{{n^2}/{\epsilon}}.
    \end{equation}
\end{theorem}
\begin{proof}
    From Theorem~\ref{thrm:hopm-sublin-conv} (see Appendix~\ref{app:hopm-convergence-proof}) on the sublinear convergence of HOPM  we have
    \begin{equation*}
        \epsilon \leq B{\left( \frac{p-2}{n^2 p}k \right)^{-\frac{p}{p-2}} },
    \end{equation*}
    where $p=n(3n-3)^{4n}$ for the tensor $T_\psi$. For large $n \gg 1$ it reduces to
    \begin{equation}\label{eqn:hopm-compl-conv-rate-qubits}
        k \leq B{\frac{n^2}{\epsilon} }\quad \in \bigO{n^2/\epsilon}.
    \end{equation}
\end{proof}

\begin{remark}
  []\label{prop:global-phases}
  The global phase of $\ket{{\bf v}_i}$ and the chosen
  computational basis do not affect the result for $\lambda$ and
  convergence of Algorithm~\ref{alg:hopm}.
\end{remark}
Indeed from line~\ref{line:hopm_nmode_prod} of Algorithm~\ref{alg:hopm} it
is evident that global phase prefactors $e^{i \alpha_j}$ ($\alpha_j \in
\mathbb{R}$) of ${{\bf v}^{(k)}_{j}}$ do not affect the result
$\lambda$ of the
algorithm on any iteration $k$. From the same line we see that the
change of basis does not change $\lambda$, since the inner product
of $T_\psi$ with all the vectors ${\bf v}_i$ on $k$-th iteration is
basis invariant. This also means that the convergence of the
algorithm, which is determined by $\lambda$ (see
Theorem~\ref{thrm:hopm-conv}), is also not affected.

\section{Quantum implementation of HOPM for estimating the
entanglement eigenvalue}
\label{ssec:our_implementation}

In this section, we present an iterative quantum algorithm
(Algorithm~\ref{alg:quals}), which is a HOPM approach for
approximating RTA with its crucial steps
(lines~\ref{line:hopm_v_update}-\ref{line:hopm_nmode_prod} in
Algorithm~\ref{alg:hopm}) performed on a quantum device. We will
refer to this algorithm as QHOPM\@.
The input for the algorithm is an $n$-qubit unitary operator
$U_\psi \in \mathrm{U}(2^n)$ that prepares the target state
$\ket{\psi} = U_\psi \ket{\bf 0} \in \Hilbert^n$, where $\ket{\bf
0} = \ket{0}^{\otimes n}$ is the initial state of the system.
The separable state $\ket{\phi} \in S_n$ is encoded as a tensor
product of one-qubit $x$ and $z$ rotations acting on $\ket{\bf 0}$:
\begin{equation}\label{eq:phi-encode}
  \ket{\phi} = \bigotimes_{i=1}^n \ket{{\bf v}_i}
  = \bigotimes_{i=1}^n R_z(\varphi_i) R_x (\vartheta_i) \ket{\bf 0},
\end{equation}
where $\ket{\mathbf{v}_i}$ are one-qubit states,
and $\vartheta_i \in [0, \pi), \varphi_i \in [0, 2\pi)$ are the
angles used to encode $\ket{\phi}$. This choice of encoding of
$\ket{\mathbf{v}_i}$ differs from any other encoding only by global
phase and therefore is valid due to Remark~\ref{prop:global-phases}.
We choose an initial separable state $\ket{\phi^{(0)}}$ by randomly
choosing the angles $(\vartheta_i^{(0)}, \varphi_i^{(0)})$ as a
starting point for the approximation.

The first key steps of Algorithm~\ref{alg:hopm} are
lines~\ref{line:hopm_v_update} and~\ref{line:hopm_v_norm}
(consisting of the $n$-mode product and normalization)
which
yield an updated version of $\ket{\mathbf{v}_i}$.

\begin{figure} %
  \centering
    \includegraphics{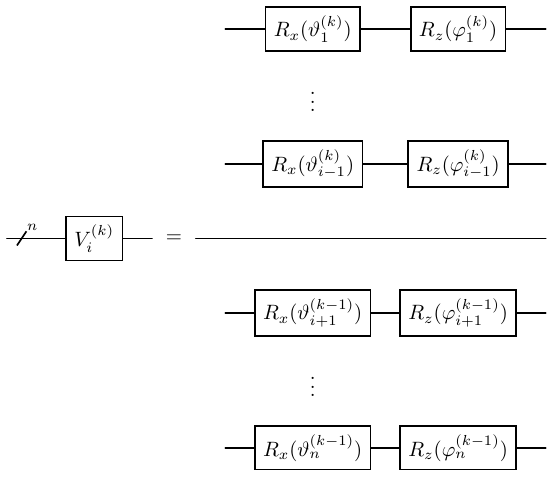}
  \caption{\textbf{
      Circuit representation of a separable state used for the $i$th
    mode update.}
    The sub-circuit implements  $V_i^{(k)}$ (see
    \meqref{eqn:tensor_contract_circuit}),
    the vector part of the $n$-mode vector product for
    line~\ref{line:hopm_v_update} of Algorithm~\ref{alg:hopm}.
  }
  \label{fig:v_n}
\end{figure}
\begin{proposition}\label{prop:alg-equiv}
  Lines~\ref{line:hopm_v_update} and~\ref{line:hopm_v_norm} of
  Algorithm~\ref{alg:hopm} for the state $\ket{\psi} = U_\psi
  \ket{\bf 0}$ at any iteration  $k>0$ are equivalent, up to a
  global phase, to the following update of $\ket{\mathbf{v}_i^{(k)}}$:
  \begin{equation}\label{eqn:u_i-normalized}
    \ket{{\bf v}_i^{(k)}} = \sum_b v_{i,b}^{(k)} \ket{b}
    = \frac{\sum_b u_{i,b}^{(k)} \ket{b}}{\sqrt{\sum_b | u_{i,b}^{(k)} |^2}},
  \end{equation}
  where
  \begin{equation}\label{eqn:u_i_j-th_component}
    u_{i,b}^{(k)} =  \bra{b_{[i]}} {V_i^{(k)}}^\dagger U_\psi \ket{\bf 0},
    \quad
    \ket{b_{[i]}} = \left( \ket{0}^{\otimes (n-i)} \otimes \ket{b}
    \otimes \ket{0}^{\otimes (i-1)}  \right)
  \end{equation}
  is the $b$-th component of the $n$-mode vector product in
  line~\ref{line:hopm_v_update} having
  \begin{equation}
    \label{eqn:tensor_contract_circuit}
    \begin{split}
      V_i^{(k)} = & \left[ \bigotimes_{l=i+1}^{n}
      R_z(\varphi_l^{(k)}) R_x(\vartheta_l^{(k)}) \right] \\%
      & \otimes I \otimes
      \\%
      & \left[ \bigotimes_{l=1}^{i-1} R_z(\varphi_l^{(k-1)})
      R_x(\vartheta_l^{(k-1)}) \right],
    \end{split}
  \end{equation}
  (see Fig.~\ref{fig:v_n}), and $\ket{b}$ is a one-qubit computational basis.
\end{proposition}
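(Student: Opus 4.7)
The plan is to unpack the definition of the $n$-mode vector product of Algorithm~\ref{alg:hopm} in coordinates, rewrite the resulting sum as a single matrix element of the form $\bra{\text{(separable)}} U_\psi\ket{\bf 0}$, and then verify that the normalisation step maps directly onto \eqref{eqn:u_i-normalized}.

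First, I will expand the $b$-th component of line~\ref{line:hopm_v_update} in coordinates: it is the sum over all indices except the $i$-th of $\psi_{b_1,\ldots,b_{i-1},b,b_{i+1},\ldots,b_n}$ multiplied by the conjugated one-qubit amplitudes $v_{j,b_j}^{(k)\,*}$ for $j<i$ and $v_{j,b_j}^{(k-1)\,*}$ for $j>i$. Using $\psi_{b_1,\ldots,b_n} = \bra{b_1,\ldots,b_n} U_\psi \ket{\bf 0}$ and recalling that the $v_{j,b_j}$ are the amplitudes of $\ket{\mathbf{v}_j} = R_z(\varphi_j) R_x(\vartheta_j)\ket{0}$ from \eqref{eq:phi-encode}, I will then refactor the product of conjugated amplitudes back into a bra of the separable state whose $j$-th factor is $\ket{\mathbf{v}_j^{(k)}}$ for $j<i$, $\ket{b}$ in the $i$-th slot, and $\ket{\mathbf{v}_j^{(k-1)}}$ for $j>i$.

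Next, I will identify this separable-state bra with $\bra{b_{[i]}} {V_i^{(k)}}^\dagger$. The operator $V_i^{(k)}$ in \eqref{eqn:tensor_contract_circuit} is precisely the tensor product of the appropriate $R_z R_x$ rotations on every qubit except the $i$-th (where it is the identity), so $V_i^{(k)}\ket{b_{[i]}}$ is exactly the product state just described, with $\ket{b}$ sitting on qubit $i$. Conjugate-transposing this identity yields $u_{i,b}^{(k)} = \bra{b_{[i]}} {V_i^{(k)}}^\dagger U_\psi \ket{\bf 0}$, matching \eqref{eqn:u_i_j-th_component}. The normalisation step of line~\ref{line:hopm_v_norm} is then immediate, since $\|\mathbf{u}_i^{(k)}\|^2 = \sum_b |u_{i,b}^{(k)}|^2$ by the definition of the vector Frobenius norm, so dividing recovers \eqref{eqn:u_i-normalized}.

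For the ``up to a global phase'' clause, I will invoke Remark~\ref{prop:global-phases}: the chosen encoding $R_z(\varphi_i) R_x(\vartheta_i)\ket{0}$ differs from any Bloch-sphere parameterisation of $\ket{\mathbf{v}_i}$ only by a one-qubit global phase, and these phases accumulate into a single global phase on the full separable state that neither affects $\lambda$ nor the normalised update. The only real obstacle, and what I will need to be meticulous about, is keeping index conventions aligned: the order of tensor factors in Fig.~\ref{fig:v_n}, the position of $\ket{b}$ inside $\ket{b_{[i]}} = \ket{0}^{\otimes(n-i)} \otimes \ket{b} \otimes \ket{0}^{\otimes(i-1)}$, the assignment of iteration indices $(k)$ versus $(k-1)$ on the two sides of qubit $i$, and the placement of the complex conjugation must all line up term by term; once that bookkeeping is fixed the claim reduces to a direct comparison of coordinate expansions.
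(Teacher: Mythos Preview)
Your proposal is correct and follows essentially the same route as the paper: expand the $b$-th component of the $n$-mode product in coordinates, substitute $\psi_{l_1,\ldots,l_n}=\bra{l_1,\ldots,l_n}U_\psi\ket{\bf 0}$ and $v_{j,l}^{(t)}=\bra{l}R_z(\varphi_j^{(t)})R_x(\vartheta_j^{(t)})\ket{0}$, collapse the sum into $\bra{b_{[i]}}{V_i^{(k)}}^\dagger U_\psi\ket{\bf 0}$, handle the normalisation as the passage from line~\ref{line:hopm_v_update} to line~\ref{line:hopm_v_norm}, and invoke Remark~\ref{prop:global-phases} for the global phase. Your explicit flag about aligning the iteration superscripts $(k)$ versus $(k-1)$ with the tensor-factor ordering in $\ket{b_{[i]}}$ and $V_i^{(k)}$ is well placed, since the paper's own equation \eqref{eqn:tensor_contract_circuit} and Fig.~\ref{fig:v_n} disagree on precisely this point; follow the convention of Algorithm~\ref{alg:hopm} and Fig.~\ref{fig:v_n} (as you do) and the bookkeeping goes through.
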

\begin{proof}
  We start with line~\ref{line:hopm_v_update}
  Algorithm~\ref{alg:hopm} where we calculate the updated ${\bf
  u}_i^{(k)}$, the $b$-th component of which is
  \begin{align*}
    u_{i,b}^{(k)}  = T_\psi \times \left(
      {\textbf{v}_1^{(k)}}^\ast, \ldots
      ,{\textbf{v}_{i-1}^{(k)}}^\ast, {\bf e}^\ast_b,
      {\textbf{v}_{i+1}^{(k-1)}}^\ast, \ldots,
    {\textbf{v}_n^{(k-1)}}^\ast\right),
  \end{align*}
  where ${\bf e}_b$ is a one-qubit basis vector for the $b$-th component.
  Let us substitute to the right-hand side the following
  expressions for the components of $T_\psi$ and ${\bf v}_{m \neq
  i}^{(t)}$ in terms of $U_\psi$ and $R_zR_x$:
  \[
  \psi_{l_1,\ldots,l_n} = \bra{l_1 \ldots l_n} U_\psi \ket{\bf 0};\quad
  v_{m,l}^{(t)} = \bra{l} R_z (\varphi_m^{(t)}) R_x
  (\vartheta_m^{(t)}) \ket{0},
  \]
  and taking into account that the $l$-th component of  ${\bf e}_b$
  in bra-ket notation is $\bra{l} b \rangle$,
  we get for the $b$-th component of $\textbf{u}_i^{(k)}$
  \begin{equation*}%
    \begin{split}
      u_{i,b}^{(k)} = & \sum_{l_1,\ldots,l_n}  \bra{0}  R_x^\dagger
      (\vartheta_{1}^{(k)}) R_z^\dagger (\varphi_{1}^{(k)})
      \ket{l_1} \ldots\langle b \ket{l_i} \ldots \\
      & \bra{0} R_x^\dagger (\vartheta_{n}^{(k-1)}) R_z^\dagger
      (\varphi_{n}^{(k-1)}) \ket{l_n} \bra{l_1 \ldots l_n} U_\psi
      \ket{0}                   \\
      =                 & \bra{b_{[i]}} {V_i^{(k)}}^\dagger U_\psi
      \ket{\bf 0},
    \end{split}
  \end{equation*}
  where $V_i^{(k)}$ is defined by \meqref{eqn:tensor_contract_circuit}.
  This expression corresponds to Eq.~(\ref{eqn:u_i_j-th_component}).

  To prove that Eq.~(\ref{eqn:u_i_j-th_component}) yields
  line~\ref{line:hopm_v_update} of Algorithm~\ref{alg:hopm} we
  perform the above steps in reverse, taking into account
  Remark~\ref{prop:global-phases}.

  Note, that for the components defined by
  \eqref{eqn:u_i_j-th_component} to correspond to a proper quantum
  state, they need to be normalized:
  \begin{equation*}%
    \ket{{\bf v}_i^{(k)}} = \frac{\sum_b u_{i,b}^{(k)}
    \ket{b}}{\sqrt{\sum_b | u_{i,b}^{(k)} |^2}},
  \end{equation*}
  which corresponds to the line~\ref{line:hopm_v_norm} of
  Algorithm~\ref{alg:hopm}.
\end{proof}

Proposition~\ref{prop:alg-equiv} allows us to understand which type
of operations and measurements we should perform to implement
lines~ \ref{line:hopm_v_update} and~\ref{line:hopm_v_norm} of the
algorithm, in particular, we need to be able to obtain the updated
pair of angles $(\vartheta_i^{(k)}, \varphi_i^{(k)})$ for further
encoding the updated $\ket{{\bf v}_i^{(k)}} =
R_z(\varphi_i^{(k)})R_x(\vartheta_i^{(k)})\ket{0}$.

\subsubsection{Recovering one-qubit states with
tomography}\label{sec:one-qubit_tomography}
To obtain the angles $(\vartheta_i^{(k)}, \varphi_i^{(k)})$ for the
one-qubit state $\ket{{\bf v}_i^{(k)}}$
of the $i$-th qubit at an arbitrary iteration  $k$ of QHOPM, we use
the following one-qubit tomography procedure.
Let $\tomo{i}\colon  \mathrm{U}(2^n) \to [0, \pi) \times [0, 2\pi)$,
which for some $W$ returns the angles  $(\vartheta_i, \varphi_i)$,
such that for $\ket{w} = W\ketZ$, $\ket{q_i} =
R_z(\varphi_i)R_x(\vartheta_i) \ket{0}$ and any one-qubit basis
state $\ket{s}$:
\begin{equation}
  \label{eqn:coeff_one_qubit_tomography}
  \braket{s | q_i} = \frac{\braket{s_{[i]} | w}}%
  {\sqrt{%
    |\braket{s_{[i]} | w}|^2 + |\braket{ s^\perp_{[i]} | w}|^2}%
  },
\end{equation}
and $\ket{s^\perp_{[i]}}$ is an orthogonal state to $\ket{s_{[i]}}$
defined in~\meqref{eqn:u_i_j-th_component}.

In our implementation, $\tomo{i}$ solves (using a classical device)
the following system of equations
\begin{equation}\label{eq:tomography-system}
  \begin{aligned}
    \braket {Z}_i  &= 2P_i(0, W) - 1   = \cos \vartheta_i;                \\
    \braket {X}_i  &= 2P_i(+, W) - 1 = \sin \vartheta_i \sin \varphi_i; \\
    \braket {Y}_i  &= 2P_i(\imi, W) - 1 = -\sin \vartheta_i \cos \varphi_i, \\
  \end{aligned}
\end{equation}
where $\braket {A}_i = \bra{q_i} A \ket{q_i}$;
for the quantum system prepared in the state $\ket{w}$,
\begin{equation}\label{eqn:tomography-prob-s}
  P_i(b, W)
  =  | \braket{b | q_i} |^2
  = \frac{|\braket {b_{[i]} | w}|^2}{|\braket{ b_{[i]} | w}|^2 +
  |\braket{ b^\perp_{[i]} | w}|^2}
\end{equation}
is a probability of the $i$-th qubit being in the state $\ket{b}$,
whilst other qubits are in the state $\ket{0}$.
The probabilities $P_i(b, W)$ are obtained by querying the quantum system.

One way to obtain $P_i(b, W)$ is by direct measurement of $|\langle
b_{[i]}\ket{w}|^2$,
however the number of shots for a given accuracy in this approach
grows exponentially with the number of qubits.
In our implementation, we use a Hadamard test-based~\cite{Cleve1998} procedure
where the number of shots depnds only on the accuracy and not the
number of qubits.

First, we classically reconstruct the coefficients $C_s =
\braket{s|q_i}$  (defined by
\meqref{eqn:coeff_one_qubit_tomography}) for the state
\[
\ket{q_i} = C_b \ket{b} + C_{b^\perp} \ket{b^\perp}
\]
in some basis $B_b = \{ \ket{b}, \ket{b^\perp} \}$.
For this we measure the real and imaginary parts of $ \braket{s_{[i]}| w}$
for each basis state $\ket{s} \in B_b$,
using the Hadamard test procedure as follows.
\begin{enumerate}
  \item Introduce an ancilla qubit, initialised in the state $\ket{+}$.
    The other qubits (referred to as data qubits in what follows)
    are initialised in the ground state $\ket{\bf 0}$.
  \item Perform a unitary operation $W$ on the data qubits,
    controlled by the ancilla qubit.
  \item Perform a unitary operation $U_{s}^\dagger$, that
    transforms $\bra{0}$ to $\bra{s}$ state, on the $i$-th data
    qubit, controlled by the ancilla qubit.
  \item Measure $x_a = \braket{X_a}$ and $y_a = \braket{Y_a}$ on
    the ancilla qubit  ($A_a = A \otimes I^n$).
  \item Calculate $\braket{s_{[i]}| w} = x_a + \imi y_a$
    ($\imath$ is an imaginary unit).
\end{enumerate}

In the case of QHOPM,
$W = {V_i}^{(k)\dagger}U_\psi$ and $\ket{q_i} = \ket{{\bf
v}_i^{(k)}}$ as defined in \meqref{eqn:u_i-normalized}.

\subsubsection{Measuring Entanglement}
\label{ssec:the_final_algorithm}
At the end of each iteration $k$, we need to measure $\lambda^{(k)}$.
\begin{remark}\label{prop:lambda-circ-equiv}
  Line~\ref{line:hopm_nmode_prod} of Algorithm~\ref{alg:hopm} for a
  target state $\ket{\psi} = U_\psi \ket{\bf 0}$ at any $k>0$
  iteration is equivalent to:
  \begin{equation}\label{eqn:lambda-quantum}
    \lambda^{(k)} = | \bra{\bf 0} {V^{(k)}}^\dagger U_\psi \ket{\bf 0} |,
  \end{equation}
  where
  \begin{equation}
    \label{eqn:inner_prodcut_circuit}
    V^{(k)} = \bigotimes_{i=1}^{n} R_z(\varphi_i^{(k)})
    R_x(\vartheta_i^{(k)}).
  \end{equation}
\end{remark}
Indeed, using the same procedure as in
Proposition~\ref{prop:alg-equiv} we find:
\begin{equation*}%
  \lambda^{(k)} = |\bra{\bf 0} {V^{(k)}}^\dagger U_\psi \ket{\bf 0}|.
\end{equation*}

From this Remark we see that $\lambda^{(k)}$ can be obtained by
performing the Hadamard test-based measurement procedure as
described in the previous Subsection with
$W = {V^{(k)}}^\dagger U_\psi$ (see Fig.~\ref{fig:updating-k-circ}(b)).
We will denote this operation as
\begin{align*}
  {\Lambda}\colon & \Hilbert^n \to [0, 1]      \\
  & \ket{w} \mapsto \lambda.
\end{align*}

By combining the classical Algorithm~\ref{alg:hopm} with the quantum
operations above and making use of one-qubit tomography
we execute the most memory-intensive operations (contractions of a
tensor of size $2^n$) of Algorithm~\ref{alg:hopm} in the quantum
domain using $n$ qubits; these steps are given in Algorithm~\ref{alg:quals}.
The main steps of the algorithm implementation (lines
  \ref{line:hopm_v_update} and \ref{line:hopm_nmode_prod} in
Algorithm~\ref{alg:quals}) are summarized in Fig.~\ref{fig:updating-k-circ}.
The result is summarised in Theorem~\ref{theorem:qhopm}.

\begin{algorithm}
  \caption{Iterative quantum implementation of Algorithm~\ref{alg:hopm}.
  }
  \label{alg:quals}
  \begin{algorithmic}[1]
    \Procedure{QHOPM}{$U_{\psi}$, $\epsilon$}
    \State{Choose $(\vartheta^{(0)}_i, \varphi^{(0)}_i) \in [0,
    \pi) \times [0, 2\pi)$  for $i=1,\ldots,n$}
    \State{initialise $k$ to $0$}
    \While{\label{line:qhopm_while}$\vert\lambda_n^{(k+1)}-\lambda_n^{(k)}\vert
    > \epsilon$}
    \For{$i \in {1,\ldots,n}$}
    \State{$\varphi_i^{(k+1)}, \vartheta_i^{(k+1)} =
      \mathcal{T}_i({V_i^{(k+1)}}^{\dagger} U_\psi \ket{\bf
    0})$}\label{line:qhopm_v_update}
    \EndFor
    \State{$\lambda^{(k+1)} = \Lambda({V^{(k+1)}}^\dagger U_\psi
    \ket{\bf 0}) $}\label{line:qhopm_nmode_prod}
    \State{increment $k$}
    \EndWhile
    \Return{$\lambda^{(k+1)}$ and $\{ (\vartheta^{(k+1)}_i,
    \varphi^{(k+1)}_i) \}_{i=1\ldots n} $}
    \EndProcedure
  \end{algorithmic}
\end{algorithm}

\begin{figure}[ht]
  \centering
  \includegraphics{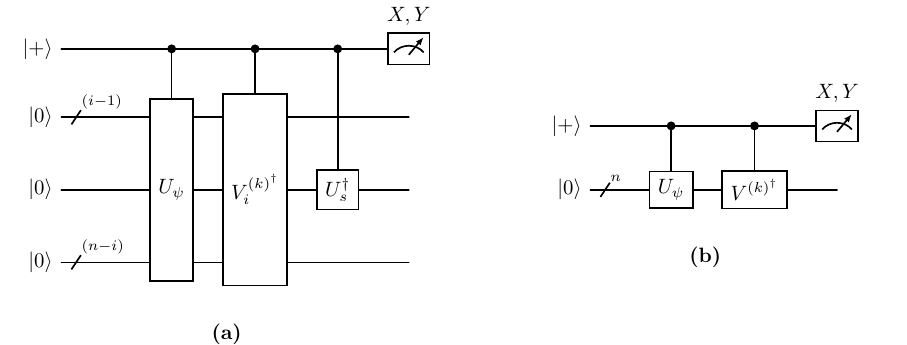}
  \caption{\textbf{Circuit representations of the main steps of
    QHOPM for the $k$-th iteration.}
    \textbf{(a)} one-qubit tomography for measuring $u_{i,s}^{(k)}$
    (see Sec.~\ref{sec:one-qubit_tomography});
    \textbf{(b)} measuring $\lambda^{(k)}$. The meter labels show the
  operators to measure.}
  \label{fig:updating-k-circ}
\end{figure}

\begin{theorem}\label{theorem:qhopm}
  Given
  a unitary operator $U_\psi \in \mathrm{U}(2^n)$, that prepares
  a target state $\ket{\psi} = U_\psi \ket{\mathbf{0}}$ and a
  sufficiently small $\epsilon$,
  QHOPM (Algorithm~\ref{alg:quals}) returns a pair
  $\lambda$
  and $\{ (\vartheta_i, \varphi_i) \}_{i=1\ldots n}$.
  Here
  $\{ (\vartheta_i, \varphi_i) \}_{i=1\ldots n}$
  are the parameters of the encoding of a separable state $\ket{\phi}$.
  These values obey the following conditions:
  \begin{enumerate}
    \item $\ket{\phi}$ is a RTA of $\ket{\psi}$.
    \item $\lambda = | \langle \phi \vert \psi \rangle|$ is
      an approximation of the entanglement eigenvalue of $\ket{\psi}$
      up to $\epsilon$.
  \end{enumerate}
\end{theorem}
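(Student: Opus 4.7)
The plan is to argue that QHOPM (Algorithm~\ref{alg:quals}) reproduces, iteration-by-iteration, the trajectory of HOPM (Algorithm~\ref{alg:hopm}) up to per-qubit global phases, and then invoke the existing convergence analysis of HOPM for rank-1 tensor approximation to obtain both conclusions of the theorem. The calculational heavy lifting has already been done in Proposition~\ref{prop:alg-equiv} and Remark~\ref{prop:lambda-circ-equiv}; what remains is to chain these local correspondences across the outer loop and to appeal to the known convergence behaviour for the minimization of Eq.~\eqref{eqn:gm_obj_func} over the Segre variety.

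First I would verify that the body of the while-loop in QHOPM faithfully realizes one pass of HOPM. For the inner update on line~\ref{line:qhopm_v_update}, Proposition~\ref{prop:alg-equiv} shows that the combined effect of lines~\ref{line:hopm_v_update} and~\ref{line:hopm_v_norm} of HOPM is to produce the normalized vector in \meqref{eqn:u_i-normalized}, whose amplitudes are exactly the numerators $u_{i,b}^{(k)} = \bra{b_{[i]}}{V_i^{(k)}}^\dagger U_\psi\ket{\mathbf{0}}$. The one-qubit tomography $\tomo{i}$ applied to $W = {V_i^{(k+1)}}^\dagger U_\psi$ returns, by \meqref{eqn:coeff_one_qubit_tomography}, angles $(\vartheta_i^{(k+1)},\varphi_i^{(k+1)})$ such that $R_z(\varphi_i^{(k+1)})R_x(\vartheta_i^{(k+1)})\ket{0}$ has precisely the amplitudes of \meqref{eqn:u_i-normalized} (up to a global phase); by Remark~\ref{prop:global-phases} this phase is immaterial. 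Likewise, Remark~\ref{prop:lambda-circ-equiv} identifies the output of $\Lambda$ on line~\ref{line:qhopm_nmode_prod} with the value $|T_\psi \times({\mathbf{v}_1^{(k+1)*}},\ldots,{\mathbf{v}_n^{(k+1)*}})|$ that HOPM would compute on line~\ref{line:hopm_nmode_prod}. Chaining these equivalences inductively over $k$ (the base case being the common randomly chosen initial angles), QHOPM produces the same sequence $\{\lambda^{(k)}\}$ as HOPM, so the two algorithms terminate at the same iteration and emit angles $\{(\vartheta_i,\varphi_i)\}$ encoding the same separable state $\ket{\phi}$ up to per-qubit global phases.

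Having reduced everything to classical HOPM, I would close the proof by citing the convergence analysis for HOPM applied to RTA~\cite{DeLathauwer2000,Ni2014,Zhang2020a}: for sufficiently small $\epsilon$, the monotone sequence $\lambda^{(k)}$ converges and the limiting tuple of one-qubit states is a stationary point of the system~\eqref{eqn:nonlinear-system}, which by construction is a rank-1 tensor approximation of $T_\psi$; this establishes claim~1. Claim~2 is then immediate from Remark~\ref{prop:lambda-circ-equiv}, which identifies $\lambda$ with $|\langle\phi|\psi\rangle|$, combined with Definition~\ref{def:geometric_entanglement}. The main obstacle is not the quantum-to-classical bookkeeping, which is essentially a repackaging of the earlier results, but rather the qualitative phrase \emph{sufficiently small} $\epsilon$: because RTA is $\NP$-hard, HOPM (and therefore QHOPM) only converges to a local maximum of $|\langle\phi|\psi\rangle|$, so how well $\lambda$ approximates the true $\hat\lambda$ depends on the initial angles and on the convergence rate inherited from classical HOPM, which must be inherited rather than reproved here.
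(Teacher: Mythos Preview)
Your proposal is correct and follows essentially the same approach as the paper's proof: reduce the theorem to the step-by-step equivalence of QHOPM with HOPM by invoking Proposition~\ref{prop:alg-equiv} (with Remark~\ref{prop:global-phases}) for line~\ref{line:qhopm_v_update} and Remark~\ref{prop:lambda-circ-equiv} for line~\ref{line:qhopm_nmode_prod}, then defer to the known behaviour of HOPM for RTA. Your version is in fact slightly more explicit than the paper's --- you spell out the inductive chaining over $k$ and flag the caveat that ``approximation'' here means a local solution --- but the structure and the key ingredients are the same.
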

\begin{proof}

  We note that HOPM (Algorithm~\ref{alg:hopm}) satisfies
  points 1 and 2 of the theorem for the components $T_\psi$ of the
  state $\ket{\psi}$ and one-qubit components $\mathbf{v}_i$ of
  $\ket{\phi}$. Thus, to prove the points 1 and 2 for QHOPM it is
  enough to prove the equivalence of
  lines~\ref{line:hopm_v_update}-\ref{line:hopm_nmode_prod} of
  Algorithm~\ref{alg:hopm} to lines~\ref{line:qhopm_v_update}
  and~\ref{line:qhopm_nmode_prod} of Algorithm~\ref{alg:quals}.

  First, let us prove the equivalence of
  lines~\ref{line:hopm_v_update} and~\ref{line:hopm_v_norm} of HOPM
  to line~\ref{line:qhopm_v_update} of QHOPM\@.
  Having the encoding of the state $\ket{\mathbf{v}_i^{(k+1)}}$ in
  terms of the angles $(\vartheta_i^{(k+1)}, \varphi_i^{(k+1)})$,
  obtained in line~\ref{line:qhopm_v_update} of QHOPM, we
  reconstruct the result of lines~\ref{line:hopm_v_update}
  and~\ref{line:hopm_v_norm} of HOPM up to a global phase as
  follows, due to Proposition~\ref{prop:alg-equiv},
  \meqref{eq:phi-encode} and Remark~\ref{prop:global-phases}:
  \begin{equation}
    v_{i,b}^{(k+1)} = \langle b | \mathbf{v}_{i}^{(k+1)} \rangle =
    \bra{b} R_z(\varphi_i^{(k+1)}) R_x (\vartheta_i^{(k+1)}) \ket{0}.
  \end{equation}
  To show the other direction, having the components
  $v_{i,b}^{(k+1)}$, obtained in lines~\ref{line:hopm_v_update}
  and~\ref{line:hopm_v_norm} of HOPM, we solve the system of
  equations~\meqref{eq:tomography-system} and recover the angles
  $(\vartheta_i^{(k+1)}, \varphi_i^{(k+1)})$. This proves the
  equivalence of
  lines~\ref{line:hopm_v_update}-\ref{line:hopm_v_norm} of HOPM to
  line~\ref{line:qhopm_v_update} of QHOPM\@.
  This result does not depend upon $i$ or $k$, thus, it is valid
  for any $i$ and $k$.

  In Remark~\ref{prop:lambda-circ-equiv} we showed how to use the
  obtained angles $(\vartheta_i^{(k+1)}, \varphi_i^{(k+1)})$ to
  calculate $\lambda^{(k+1)}$, which is just a way to rewrite
  line~\ref{line:hopm_nmode_prod} of HOPM using $U_\psi$ and
  $V^{(k+1)}$. The measurement of $\lambda^{(k+1)}$ is denoted as
  $\Lambda({V^{(k+1)}}^\dagger U_\psi \ket{\bf 0})$, which is
  line~\ref{line:qhopm_nmode_prod} of QHOPM\@. This shows the
  equivalence of line~\ref{line:hopm_nmode_prod} of HOPM to
  line~\ref{line:qhopm_nmode_prod} of QHOPM for any $k>0$ and
  finishes the proof.
\end{proof}

\begin{theorem}
    The total shot complexity of QHOPM for all iterations needed to 
    obtain the approximation of $\lambda$ with accuracy $\epsilon$ 
    having the accuracy of each measurement $\delta$ is 
    \[
        \bigO{\frac{n^3}{\epsilon \delta^2}}.
    \]
\end{theorem}
\begin{proof}
Recall that $n$ is the number of qubits in the  input unitary
$U_\psi$\footnote{One may assume the circuit description of
  $U_\psi$
is of length polynomial in $n$.}
that defines  the target state $\ket{\psi}$.
Assuming the use of the Hadamard test for the $n$-qubit target
state $\ket{\psi}$, QHOPM requires $n+1$ qubits.

Each application of the tomography step on
line~\ref{line:qhopm_v_update} of Algorithm~\ref{alg:quals}
requires $4$ measurements to recover $\ket{\mathbf{v}_i^{(k)}}$, and
each estimation of the scalar product at
line~\ref{line:qhopm_nmode_prod} of Algorithm~\ref{alg:quals}
requires $2$ measurements.
Thus, each iteration of the loop in line~\ref{line:qhopm_while}
uses $4n + 2$ measurements.
Each measurement (assuming the Hadamard test)  requires
$\bigO{\delta^{-2}}$ single-shot readouts\footnote{Note it is
  possible to reduce the amount of measurements required to
$O(\delta^{-1})$ via amplitude amplification techniques.}
due to the Chernoff bound for absolute error $\delta$.
The total shot complexity of each iteration is $\bigO{n\delta^{-2}}$.
Recall from Theorem~\ref{thrm:hopm-conv} that the number of iterations
needed for HOPM to converge up to accuracy $\epsilon$ scales 
with $n$ as $\bigO{{n^2}/{\epsilon}}$. Thus, taking into account 
Theorem~\ref{theorem:qhopm} about equivalence of HOPM and QHOPM, 
QHOPM's time complexity is
\[
    \bigO{\frac{n^3}{\epsilon \delta^2}}.
\]

\end{proof}

\section{QHOPM simulation}
\label{ssec:quantum_circuit_simulation}
\subsection{Simulation setup}
We simulate the Hadamard test variant of QHOPM using the
Qiskit~\cite{Qiskit2023} platform (IBM Qiskit Version 0.46.0).
We find in practice that the algorithm gives sufficiently good results
when each individual simulation uses $1\times10^5$ shots which
corresponds to an absolute accuracy $\epsilon \approx 0.003$.

For a given a target state, regions of initial separable states
will converge to different local minima~\cite{DeLathauwer2000}.
When using the HOPM algorithm one might try many initial separable
states and choose the minimum value obtained.
In this work we use 10 random initial states to demonstrate the
variation between initial starting points.

To investigate the effect of noise on QHOPM, we use two types of 
noise models:
1)~simplified model, which is basically a depolarising noise 
channel, applied to all the gates with the same noise rate $p$ 
using the Qiskit AerSimulator; 
2)~realistic model, implemented in ``FakeLima'' and ``FakeSherbrooke'' 
Qiskit backends, developed by IBM to mimic the real ``Lima'' and 
``Sherbrooke'' superconducting qubit IBM's devices (to ensure reproducibility).

To demonstrate the performance of QHOPM we chose the following target states:
\begin{itemize}
  \item 3 qubit W state, with known $\GE = 5/9$~\cite{Wei2003};
  \item 3, 4, 5, 6, and 9 qubit GHZ states (\GHZ{n}), each with known $\GE = 0.5$;
  \item 3, 6 and 9 qubit ring cluster states~\cite{Guehne2009} (\Ring{n});
  \item 3, 4, 5, 6 and 9 qubit random states (\Random{n}).
\end{itemize}

To generate \Random{n} states we created quantum circuits with the
following method:
we sampled gates (uniformly at random) from the gate set of CNOT
and Qiskit's $U$ gate (general single qubit rotation) with random angles
and applied them to uniformly randomly selected qubits.
This process was iterated until the circuit depth reached 10.

The initial random separable states $V^{(0)}$ (see
\meqref{eq:phi-encode}) were chosen by sampling  uniformly at random
$\vartheta_i \in [0, \pi)$ and
$\varphi_i \in [0, 2\pi)$ for $i = 1,\dots,n$.

\subsection{Analysis of quantum noise effects in QHOPM and ``proof of concept'' mitigation}

In this Section we will analyse the effects of the following noise types on QHOPM: 1) statistical noise connected to a finite number of shots; 2) SPAM noise; 3) incoherent quantum noise. We assume that the the gates are well calibrated and coherent noise effects are negligible in comparison to the noise types above.

\begin{lemma}[On statistical noise effects]
    On the $k$-th iteration of QHOPM statistical noise shifts the median value of $\lambda^{(k)}$ by $\delta \lesssim 1/\sqrt{n_s}$, where $n_s$ is the number of shots used for the measurements.
\end{lemma}
\begin{proof}
    In terms of the Hadamard test ${\lambda^{(k)}}^2 = \braket{X_a}^2 + \braket{Y_a}^2 = x_a^2 + y_a^2$.
    One can define median $m$ as a minimiser of the mean absolute error $\mathbb{E}[|X-c|]$ of a real number $c$ w.r.t. random variable $X$ over the distribution of $X$~\cite{Stroock2011}: $m = \arg\min_X \mathbb{E}[|X-c|]$. If we put $c = (\hat{\lambda}^{(k)})^2$ and $X = ({\lambda}^{(k)})^2 = ((\hat{\lambda}^{(k)})^2 + (\zeta_x^2+\zeta_y^2) + 2(\zeta_x x_a+\zeta_y y_a))$, then by taking the median over the sample results we are minimizing the expected value of the squared fluctuation $\delta = \mathbb{E}[\zeta^2]$, which according to the Hoeffding's inequality will be upper bounded by a value proportional to $1/\sqrt{n_s}$, where $n_s$ is the number of shots used for the measurements.

    Measured values $|\braket {b_{[i]} | w}|^2$ in probabilities $P_i(b,W)$ in \meqref{eqn:tomography-prob-s} will experience similar fluctuations, which lead to the noisy updates of $\ket{{\bf v}_i^{(k)}}$. This in turn leads to stochastic addition to the updates of $\ket{{\bf v}_{i+l}^{(k)}}$ with $0<l\leq(n-i-1)$, which can be added as an additional $\zeta$ fluctuation to $\lambda^{(k)}$ and analysed as given above.
\end{proof}

\begin{lemma}[On the effect of SPAM noise]
    SPAM noise does not change the value of $P_i(s)$, but introduces multiplicative error for the value of $\lambda^{(k)} \to (c_{00} - c_{10}) \lambda^{(k)}$, where $c_{ij}$ are elements of the SPAM confusion matrix $C$, which correspond to probabilities of reading out state $\ket{j} \in \{ \ket{0}, \ket{1} \}$, having the actual state $\ket{i} \in \{ \ket{0}, \ket{1} \}$.
\end{lemma}
\begin{proof}
    In terms of the Hadamard test ${\lambda^{(k)}}^2 = \braket{X_a}^2 + \braket{Y_a}^2 = x_a^2 + y_a^2$. Each $x_a$ and $y_a$ are the measurement results for probabilities of certain one-qubit states $\rho_x = H\rho H $ and $\rho_y = S^\dagger H \rho H S$ being in the state $\ket{0}$ performed on the ancilla qubit. Let the SPAM noise be defined by confusion matrix $C$. In this case, 
    \begin{equation}\label{eqn:spam-noisy-measurement}
    \begin{split}
        x_a =& c_{00} \bra{0}\rho_x\ket{0} - c_{10} \bra{1} \rho_x \ket{1}
        = c_{00} \bra{+}\rho\ket{+} - c_{10} \bra{-} \rho \ket{-} \\
        =& \frac{c_{00}}{2} (\bra{0} \rho \ket{1} + \bra{1} \rho \ket{0}) 
        - \frac{c_{10}}{2} (\bra{0} \rho \ket{1} + \bra{1} \rho \ket{0}) \\
        =& (c_{00} - c_{10}) x_a'.
    \end{split}
    \end{equation}
    Similar result we get for $y_a$, which brings the results for $\lambda^{(k)}$.

    The result for $P_i(s)$ can be obtained by noticing, that we will get the same result as in \meqref{eqn:spam-noisy-measurement} for $|\braket{b_{[i]}|w}'|^2$ and $|\braket{b^\perp_{[i]}|w}'|^2$:
    \[
        P_i(s) =
      \frac{(c_{00} - c_{10})|\braket{b_{[i]}|w}'|^2}{(c_{00} - c_{10})|\braket{b_{[i]}|w}'|^2 +
      (c_{00} - c_{10})|\braket{b^\perp_{[i]}|w}'|^2 } = P_i'(s),
    \]
    where $P_i'(s)$ is noiseless value of $P_i(s)$.
\end{proof}

As for the incoherent noise first of all we would like to note that
for different topologies of a device, the Hadamard test-based implementation of QHOPM will be compiled in a different manner, 
which will change the noise channel, especially in case of non-Markovian noise. 
However, we can approximate a general scaling of noise correction to the entanglement eigenvalue for ``small'' additive uncorrelated (between algorithm iterations) noise with zero mean. From lines~6 and~8 in Algorithm~\ref{alg:hopm} we can see that the expected value of the second-order noise corrections for $\lambda^{(k)}$ scale with the number of qubits as $\bigO{n^3}$ at most. Also note that noise in the results of the one-qubit tomography $\mathcal{T}_i$ can be assumed to be an addition to incoherent noise for the $(i+1)$th separable state mode. 
This means that to get stable results, the second-order correction terms should scale as $\bigO{n^{-3}}$. 
This makes the algorithm unusable for large qubit number NISQ devices; however, it can be used on modern devices with a moderate number of qubits if we know some information about the noise channel.

To demonstrate this we simulate QHOPM for 3 different topologies and corresponding noise realizations (see the previous Subsection for details) and show a ``proof-of-concept'' way to mitigate noise, based on the approach 
proposed in~\cite{Vovrosh2021,Urbanek2021}. 
This approach is based on the assumption that the noise in the system can be treated as depolarizing noise (DN)~\cite{NielsenChuang,Emerson2005}
\begin{equation}\label{eqn:depol_channel}
  \mathcal{E}(\rho) = (1-p)\rho + p \frac{1}{2^n} I^n,
\end{equation}
where $I$ is a $2$-dimensional identity matrix; $0 \leq p \leq
4^n/(4^n - 1)$ is a parameter of the model (which we will refer to
as the noise rate). When $p=1$ the channel is fully depolarising; when $p = 4^n/(4^n - 1)$
the channel is equivalent to random Pauli errors applied
(uniformly) on each qubit with equal probability.
The parameter $p$ can be found using the following heuristics: by measuring $\braket{\rho} = \Tr[\rho^2]$~\cite{Vovrosh2021}; by measuring the expectation value of an observable with known result~\cite{Urbanek2021}. Our approach is similar to the one used in~\cite{Urbanek2021}, and is described further in the text. 

In general, the assumption of the global noise channel being depolarizing noise is not realistic. This is a significant underestimation of a noise model, which also does not include coherent noise effects and possible non-Markovian behaviour. Nevertheless, this heuristic mitigation approach has shown itself to work reasonably well for real devices~\cite{Vovrosh2021, Urbanek2021}. We note that in~\cite{Urbanek2021} the authors have also used a randomized compiling procedure~\cite{Wallman2016} to convert coherent noise effects into incoherent. In our case, we use fake IBM backends (see the previous Subsection), which to our knowledge do not include coherent noise.

As our work was not aimed at detailed noise analysis for a particular device, we
consider this simplified model of noise mitigation and analyze performance of QHOPM with respect to it.
We assume that the DN channel is applied $d$ times at a constant
rate $p$ to \emph{all qubits} %
after each layer of gates,
where $d$ is the circuit depth (maximum number of operations on a
qubit in the circuit).
The DN channel commutes with any unitary operation:
\[
U\, \mathcal{E}(\rho) \, U^\dagger
=  U \left( (1-p)\rho  + p \frac{I}{2^n} \right) U^\dagger
= \mathcal{E}(U \rho U^\dagger) ,
\]
which allows to apply the channel $d$ times at the end of the circuit:
\begin{equation}\label{eqn:depol_dm}
  \begin{split}
    \rho = \mathcal{E}^{d} (\rho') & = (1-p)^d \rho' +
    \frac{p}{2^n}\sum_{i=1}^d (1-p)^{d-i} I \\
    & = q^d \rho' +\frac{1-q^d}{2^n} I,
  \end{split}
\end{equation}
where $q = (1-p)$; $\rho'$ is a density matrix of a pure state with no noise.
The number $d$ is also be understood as the number of times that
the DN channel affects the state with the error rate $p$.

\begin{proposition}\label{prop:noise}
  Given the DN channel defined by \meqref{eqn:depol_dm},
  in terms of the Hadamard test procedure and sufficient number of shots,
  the measured $P_i(s)$ ($P_i(s, U)$ for some arbitrary $U$) and
  $\lambda^{(k)}$ can be expressed as
  \begin{equation}\label{eqn:P_i(s)_noisy}
    P_i(s) = \frac{P_i'(s)}{1-\eta P_i'(s^\perp)} \approx
    \left( 1 + \eta P_i'(s^\perp) + \ldots \right)\, P_i'(s);
  \end{equation}
  \begin{equation}\label{eqn:lambda_k_noisy_hadamard}
    \lambda^{(k)} = q^{d} 
    \lambda'^{(k)},
  \end{equation}
  where %
  $P_i'(s)$, $P_i'(s^\perp)$ and $\lambda'^{(k)}$ are the
  noise-free values of $P_i(s)$, $P_i(s^\perp)$ and
  $\lambda^{(k)}$, respectively;
  dots in \meqref{eqn:P_i(s)_noisy} correspond to the terms of the
  second order and higher in $\eta$.
\end{proposition}
\begin{proof}
  To obtain $P_i(s)$, defined by \meqref{eqn:tomography-prob-s}, we
  need to measure $|\braket{b_{[i]}|w}|$, which in terms of the
  Hadamard test procedure is done by measuring
  \begin{equation}\label{eqn:average-ancilla-xy}
    \begin{split}
      & \langle X_a \rangle = 2P^{(a)}(+) - 1, \\
      & \langle Y_a \rangle= 2P^{(a)}(\imi) - 1,
    \end{split}
  \end{equation}
  where
  \begin{equation}\label{eqn:noisy_ancilla_probs}
    \begin{split}
      P^{(a)}(+) =  & \Tr(\ket{+}\bra{+}\otimes I^{n} \rho) = q^d
      {P^{(a)}}'(+) + \frac{1-q^d}{2},                    \\
      P^{(a)}(\imi) = & \Tr(S^\dagger\ket{+}\bra{+}S \otimes I^{n}
      \rho) = q^{d} {P^{(a)}}'(-i) + \frac{1-q^{d}}{2}
    \end{split}
  \end{equation}
  are the ``noisy'' probabilities of the ancilla qubit being in the
  state $\ket{+}$ and $\ket{\imi}$, respectively, and ${P^{(a)}}'(+)$
  and ${P^{(a)}}'(\imi)$ are their ``non-noisy'' counterparts; 
  we have assumed that the phase gate is noiseless, which is usually the case.
  After substituting \meqref{eqn:noisy_ancilla_probs} to
  \meqref{eqn:average-ancilla-xy}, for $u_{i,s}^{(k)}$ we have:
  \begin{equation}\label{eqn:noisy-u_is}
    |\braket{b_{[i]}|w}|^2 = \langle X_a \rangle^2 + \langle Y_a
    \rangle^2 = q^{2d} 
    |\braket{b_{[i]}|w}'|^2,
  \end{equation}
  where $\braket{b_{[i]}|w}'$ is a non-noisy value of $\braket{b_{[i]}|w}$;
  Similarly, we get
  \begin{equation}\label{eqn:noisy-u_is-perp}
    |\braket{b^\perp_{[i]}|w}|^2 = q^{2d+2} 
    |\braket{b^\perp_{[i]}|w}'|^2.
  \end{equation}

  Substituting \meqref{eqn:noisy-u_is} and
  \meqref{eqn:noisy-u_is-perp} to \meqref{eqn:tomography-prob-s},
  we have for $P_i(s)$:
  \begin{equation}
    \begin{split}
      P_i(s) &=
      \frac{|\braket{b_{[i]}|w}'|^2}{|\braket{b_{[i]}|w}'|^2 +
      |\braket{b^\perp_{[i]}|w}'|^2 - \eta |\braket{b^\perp_{[i]}|w}'|^2} \\
      &= \frac{P'_i(s)}{1-\eta P'_i(s^\perp)}
      \approx  \left(1 + \eta P'_i(s^\perp) + \ldots \right) P'_i(s),
    \end{split}
  \end{equation}
  where $\eta = (1-q^2)>0$; the right-hand side is the Taylor
  expansion of the left side near $\eta=0$; $P'_i(s^\perp)$ is a
  non-noisy value of $P_i(s^\perp)$. This proves the first
  statement of the Proposition.

  Following the same steps for $\lambda^{(k)}$, assuming that the
  noise does not affect one-qubit tomography results, we have in
  terms of the Hadamard test procedure:
  \begin{equation}
    \begin{split}
      \lambda^{{(k)}^2} & = \langle X_a \rangle^2 + \langle Y_a
      \rangle^2 = q^{2d} 
      \lambda'^{{(k)}^2},
    \end{split}
  \end{equation}
  where 
  $\lambda'^{{(k)}}$ is a non-noisy
  entanglement eigenvalue. This proves the second statement of the
  Proposition.
\end{proof}

If all the $\eta$ terms in \meqref{eqn:P_i(s)_noisy} are negligible
($p \ll 1$), the one-qubit tomography procedure $\tomo{i}$ will
return angles with errors of similar magnitude for any circuit
depth. If these errors can be neglected, the convergence condition
of QHOPM, given in line~\ref{line:qhopm_while} in
Algorithm~\ref{alg:quals}, will change only by the prefactor
defined in \meqref{eqn:lambda_k_noisy_hadamard}.
In this work, the largest error rate that we have studied is
$p=0.05$ which gives $\eta \approx 0.1$.
Thus, according to Proposition~\ref{prop:noise} and
assuming $P_i'(s) = P_i'(s^\perp) = 0.5$,
$P_i(s)$ changes at most by approximately $0.05$. This change does
not affect the convergence, as it is seen from the simulations.
We neglect this effect in what follows and take into account only
the changes of $\lambda^{{(k)}}$,
which even for relatively shallow $U_\psi$ may
be significant.
With these assumptions we approximate the true $\GE'$ as follows:
\begin{equation}\label{eqn:noisy_gm}
  \GE' = 1 - \frac{1-\GE}{(1-p)^{2d}}.
\end{equation}
The
denominator 
is always less (or equal) than one,
which means that $\GE \geq \GE'$ within our assumptions, which
corresponds to the observed simulation results.

To mitigate the results of our algorithm on hardware or simulations where the
noise model is usually unknown, we use the following procedure.
We run the algorithm on a reference state with a known $\trueGE$
and measure the noisy value $\GE$.
In our case we chose \TState{GHZ}{n} with $\trueGE = 0.50$ since
the value is consistent for any number of qubits.
Using this value and \meqref{eqn:noisy_gm} we calculate $p$ as follows:
\begin{equation}\label{eqn:real-noise-p}
  p = 1 - \left(\frac{1 - \AVG}{1 - \trueGE}\right)^{1/2d}.
\end{equation}

The mitigation procedure now proceeds as outlined in the
depolarising noise case.

\subsection{Simulation Results}
In this section, we present simulation results of our algorithm evaluating its robustness to noise
using the IBM Qiskit~\cite{Qiskit2023} platform. We also show how to mitigate the effects of noise on the algorithm.
We refer to target states by the notation \TState{State}{n} for $n$
qubits, for example
\GHZ{9} is the GHZ state with 9 qubits
and \Random{3} is a particular random state with 3 qubits.

When using HOPM we choose the minimum $\GE$ from a sample of
initial separable states, however (due to the finite number of
shots) QHOPM tends to fluctuate around the convergence value.
We find that the median gives a more accurate summary of the convergence
value of QHOPM than the mean, perhaps reflecting some skew in the
distribution.

For a given target state
let $\setEGk$ represent a set of estimations of $\GE$ for a sample
of initial separable states at iteration $0 \leq k \leq K$.
Let $\AVG$ represent the value converged to by QHOPM, computed as
the median of the values in $\setEG = \{ \medSetEGk \mid k \in
\left[K-5,\ldots K\right]\}$,
where $\med{\mathcal{A}}$ is the median of the values in the set
$\mathcal{A}$.
Let $\STD$ be the interquartile range for $\setEG$.
We note that the choice of 5 final iterations here is arbitrary
(statistically better results are obtained if one uses more iterations).
Here we choose the median instead of the mean to avoid possible
outliers induced by certain initial separable states.

We begin by verifying that our quantum implementation matches the classical
algorithm and demonstrate the effect
on convergence of the depolarising noise model
with noise rate $p=0.01$.
We use the \GHZ{9} state, which has a known $\trueGE = 0.50$, as an example.
In Fig.~\ref{fig:error_with_shots_noise} we see a good
correspondence with classical HOPM for $1\times 10^{7}$ shots per
measurement (yielding an expected error of $\epsilon \approx 3.2
\times 10^{-4}$ according to the Chernoff bound), that is QHOPM
converges to the same value ($\AVG \approx 0.50$, $\STD \lesssim
10^{-4}$) as classical HOPM ($\AVG = \trueGE = 0.50$).
Then we reduced the number of shots to $1\times 10^5$ ($\epsilon
\approx 3.2 \times 10^{-3}$) and
we see convergence to $\AVG \approx 0.503$, within the slightly
higher $\STD \approx 0.005$ of the true value.
These results are consistent with the Chernoff bound discussed in
the previous Section.
Finally, we introduced depolarising noise  with
error rate $p=0.01$.
Remarkably, the quantum algorithm still managed to converge with
little variability ($\STD \approx 0.003$),
however, the value it converged to ($\AVG\approx 0.679$) is greater
than the true value.

\begin{figure}[h]
  \centering
  \includegraphics{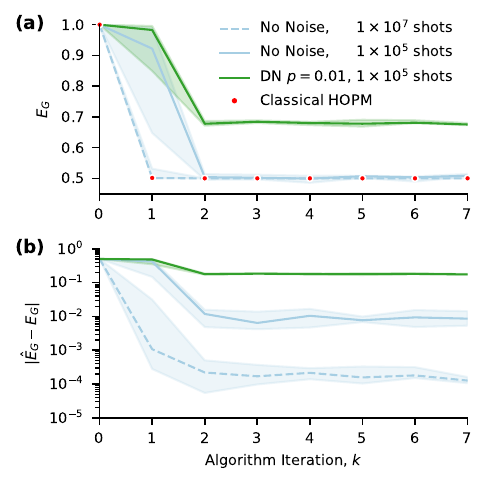}
  \caption{\textbf{The results of HOPM and QHOPM
      for (a) approximating the geometric entanglement \GE{} of
    \GHZ{9} and (b) the absolute error ($|\trueGE - \GE|$).}
    The lines represent the median value for 10 random initial
    separable states for different numbers of shots per measurement
    and noise models (see legend).
    The shaded areas show the interquartile ranges.
  }
  \label{fig:error_with_shots_noise}
\end{figure}

Notice that the distribution of $\GE$ values found after 1
iteration by QHOPM is different from that found by HOPM for the
same initial separable states.
By the second iteration, HOPM and QHOPM converge to the same values.
This divergence (from our experience) is normally not so
noticeable, but of all the states we have studied, this effect is most pronounced in \GHZ{9}.
We assume that resolution errors, due to the reduced number of
shots in the measurements, are amplified cumulatively for certain
states (\GHZ{9}, in particular) during one-qubit tomography
resulting in a different (but still improved) separable state from
what HOPM or QHOPM with infinite resolution would find.
From this point on we proceed with $1\times 10^5$ shots and the
median over 10 initial separable states for all simulations (unless
said otherwise) since it gives a sufficiently good convergence
w.r.t. classical HOPM\@.

To evaluate the performance of QHOPM for random target states we
ran QHOPM with no noise for 100 \Random{n} target states
each with 20 initial separable states
with $n\in\{3,\ldots,6\}$ qubits.
In Fig.~\ref{fig:all_qubits_random_error} we summarize the results
of the simulation.
Here for each target state at iteration $k$ we calculate
the median result $\medSetEGk$ over 20 initial separable states and
plot the median absolute difference $| \trueGE - \medSetEGk|$ (and
the interquartile range) over all the target states for each $k$
and number of qubits $n$.
We show that the median
value of the absolute error
$|\trueGE - \medSetEGk|$ converges to the values
less than $10^{-2}$ for any number of qubits, which is again
consistent with the Chernoff bound.
In the case of negligible noise, the accuracy of QHOPM depends only
on the number of shots per measurement.

\begin{figure}
  \centering
  \includegraphics{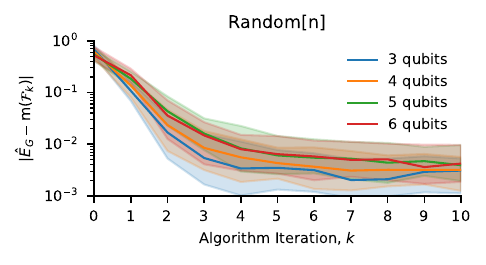}
  \caption{\textbf{The absolute error ($|\trueGE - \medSetEGk|$) of
    QHOPM approximating \GE{} of \Random{n} states with no noise.}
    The lines represent the median absolute error over 100
    \Random{n} target states ($n \in \{3,\ldots,6\}$ qubits)
    between the best (minimal) HOPM results and the median results
    $\medSetEGk$ computed by QHOPM for each target state over 20
    random initial separable states. The shaded areas show the
    interquartile ranges for the absolute error.
  }
  \label{fig:all_qubits_random_error}
\end{figure}

To explore the relationship between DN error rate $p$ and the divergence of
QHOPM's \GE{} value from HOPM's, we ran simulations with $p \in
\{0.001, 0.01, 0.05\}$ representing ``acceptable'', ``bad'' and
``terrible'' levels of noise, respectively.
We used \GHZ{9} and \Random{6} as the target states.
In Fig.~\ref{fig:mitigation}(a) we see that for \GHZ{9} when
$p=0.001$ QHOPM converges to  $\AVG \approx 0.519, \STD \approx  0.003$.
The upper limit of error rate $p$ that is considered acceptable at
this early stage of quantum computing is $0.01$,
at this level, the noise is causing our algorithm to converge to
the value $\AVG \approx 0.679$ with $\STD \approx 0.003$.
At $p = 0.05$ the algorithm still converges with a similar
variability but to the incorrect value $\AVG \approx 0.946$ with
$\STD \approx 0.002$.
The results are also summarised in Table~\ref{tab:mitigation}.

\begin{figure*}[ht]
  \centering
  \makebox[\textwidth][c]{
    \includegraphics{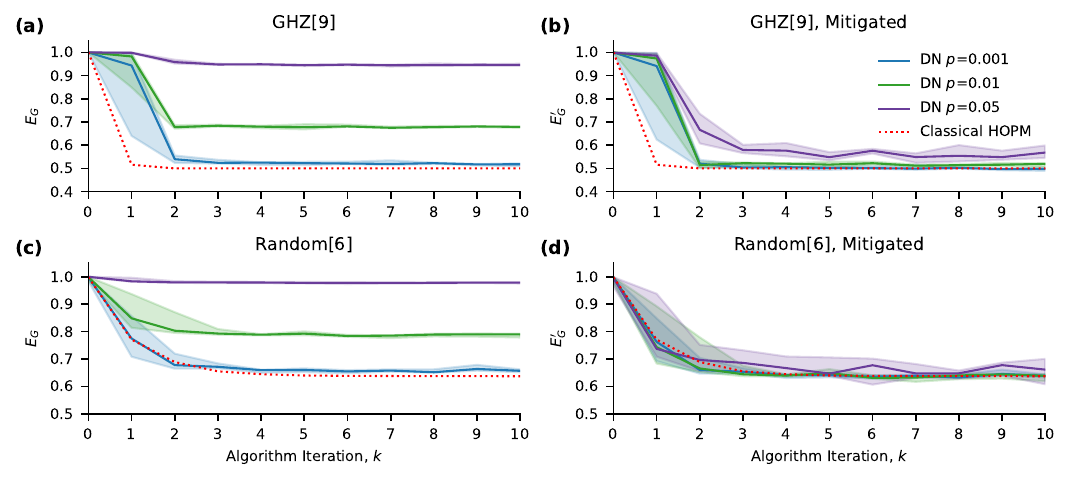}
  }%
  \caption{\textbf{Simulation results for QHOPM on target states
    \GHZ{9} and \Random{6} with the DN model and its mitigation.}
    \textbf{(a)} and \textbf{(c)} show the simulation results for
    QHOPM with input target states \GHZ{9} and \Random{6},
    respectively, for different DN error rates;
    \textbf{(b)} and \textbf{(d)} show the results of our
    mitigation procedure (see \meqref{eqn:noisy_gm}) applied to
    results in \textbf{(a)} and \textbf{(c)},
    respectively.
    Solid lines represent the median geometric entanglement,
    $\GE{}$, over the results for different initial separable states and the
    shaded colours represent the interquartile range.
    The colour of the lines represents the DN noise rate used for the
    simulation (see legend).
    Red dotted lines show the results of classical HOPM\@.
  }
  \label{fig:mitigation}
\end{figure*}

\begin{table}
  \caption{Results of HOPM and QHOPM with different DN noise rates
    $p$ for \GHZ{9} and \Random{6} states (see.
    Fig.~\ref{fig:mitigation}). Minimum HOPM estimate for \Random{6}
  is $\GE=0.636$. }
  \label{tab:mitigation}
  \begin{tabular}{lcllllrrcc}
    \toprule
    Target  & DN $p$  & \multicolumn{2}{c}{Measured} &
    \multicolumn{2}{c}{Mitigated} \\
    &  &    $\AVG$ & $\STD$ & $\AVGm$ & $\STDm$ \\
    \midrule
    \GHZ{9}                    & 0.001  & 0.519 & 0.003 & 0.499 & 0.003\\
    & 0.010  & 0.679 & 0.003 & 0.516 & 0.006\\
    & 0.050  & 0.946 & 0.002 & 0.554 & 0.019\\
    \Random{6}                 & 0.001  & 0.657 & 0.003 & 0.638 & 0.003\\
    & 0.010  & 0.789 & 0.004 & 0.639 & 0.008\\
    & 0.050  & 0.978 & 0.000 & 0.661 & 0.029\\
    \bottomrule
  \end{tabular}
\end{table}

In Fig.~\ref{fig:mitigation}(c) we give the results for the median
$\GE$ for \Random{6}.
With low noise ($p = 0.001$) QHOPM converges to $\AVG \approx
0.657, \STD\approx 0.003$, which is within $\epsilon \approx 0.02$
w.r.t. the classical algorithm's result.
Again, we see that as noise increases, the converged value
increases: for $p=0.01$ it goes to $\AVG \approx 0.789$ with $\STD
\approx 0.004$,
and for $p =0.05$ it goes to $\AVG \approx 0.978$ with $\STD\approx 0.000$.
We note that the interquartile range of the converged values is
within our measurement tolerance $\epsilon \approx 0.003$ (defined
by the chosen number of shots).

We see the same effects for: \GHZ{3}, \GHZ{6} (Supplemental
Figure~\ref{supfig:all_qubits_with_DN_mitigation_GHZ}(a),(c));
\Random{3}, \Random{9} (Supplemental
Figure~\ref{supfig:all_qubits_with_DN_mitigation_random}(a),(e));
the well studied \W{3} state with $\trueGE = 5/9$ (Supplemental
Figure~\ref{supfig:all_qubits_with_DN_mitigation_W}(a));
and the important ring cluster states \Ring{3}, \Ring{6}, \Ring{9},
(Supplemental
Figure~\ref{supfig:all_qubits_with_DN_mitigation_Ring}(a),(c),(e)).
We also observe that the shift, induced by noise, increases as the
number of qubits increases.

We then characterised the shift in $\AVG$ induced by the noise as
a function of circuit depth and noise rate.
The resulting model, based on the simple assumption
that the depolarising channel is applied with a constant
error rate $p$ after each of $d$ layers of gates, yields a formula
\meqref{eqn:noisy_gm} that
approximates the mitigated \GM{}, $\GE'$, expected without noise.
Similarly to $\AVG$ we define $\AVGm$ as the mitigated result of
QHOPM, computed as the median of the values in $\setEGm = \{
\medSetEGmk \mid k \in \left[K-5,\ldots K\right]\}$, where $\setEGmk$
represents
a set of mitigated values $\GEm$ for a sample of initial separable
states at iteration $0 \leq k \leq K$.
Similarly $\STDm$ is the interquartile range of $\setEGm$.
We observe in Fig.~\ref{fig:mitigation}(b) the effect of the
mitigation on \GHZ{9}.
For $p=0.001$, the mitigated value $\AVGm \approx  0.499$ with
$\STDm=0.003$, which corresponds to the true value.
However, for $p=0.01$, $\AVGm \approx  0.516$ with $\STDm=0.006$ is
the result.
For $p=0.05$ mitigation is still helpful and adjusts the noisy
value to $\AVGm \approx 0.554$ ($\STDm=0.019$).
We see in Fig.~\ref{fig:mitigation}(d) that the mitigation for the
state \Random{6} gives better results: for
$p=0.001$ \GE{} being mitigated to $\AVGm \approx  0.638$ ($\STDm=0.003$), for
$p=0.01 $ to $\AVGm \approx  0.639$ ($\STDm=0.008$), and for
$p=0.05 $ to $\AVGm \approx  0.661$ ($\STDm=0.029$).
We suspect that the observed improvement in mitigation is due to
the increased gate density of the \Random{6} circuit compared to \GHZ{9}.
Our model assumes the noise channel is applied to all qubits $d$
times (for circuit depth $d$) with the same error rate $p$.
However, the DN implementation in Qiskit applies the channel only
when there is a gate applied to the qubit.
The structure of the circuit for \GHZ{9} is one of the worst cases
for our mitigation method
since it includes a chain of 8 CNOTs traversing the 9 qubits one by
one, leading to a deep but sparse circuit structure.

Finally, we apply our mitigation technique to results obtained with
more realistic noise models.
We consider the IBM Qiskit ``FakeLima'' (representing an older 5 qubit device)
and ``FakeSherbrooke'' (representing a newer device) Qiskit
backends that are calibrated to match the error profile of
physical devices.
To apply our mitigation technique we need to estimate the noise
rate $p$ for DN that would
approximate the quantum device's noise.
To do this we apply QHOPM to the GHZ state, where the true
value~\trueGE{} is known.
We then use the ratio between the measured \AVG{} and the true
value \trueGE{} to estimate $p$ (see \meqref{eqn:real-noise-p}).

In Fig.~\ref{fig:mitigate_other_noise_models-ghz9-realistic}(a) we show
simulation results for QHOPM on \GHZ{9} state for the mentioned realistic
noise models and (b) the mitigated results, having the noise parameter estimated
to be $p=0.003$ for $d=20$.
In Fig.~\ref{fig:mitigate_other_noise_models} we show the median
absolute error of the QHOPM results for \Random{6} state within (a)
FakeSherbrooke and (c) FakeLima backends, and their mitigation
results (b) and (d), respectively.

\begin{figure*}[ht]
    \centering
    \includegraphics[width=\textwidth]{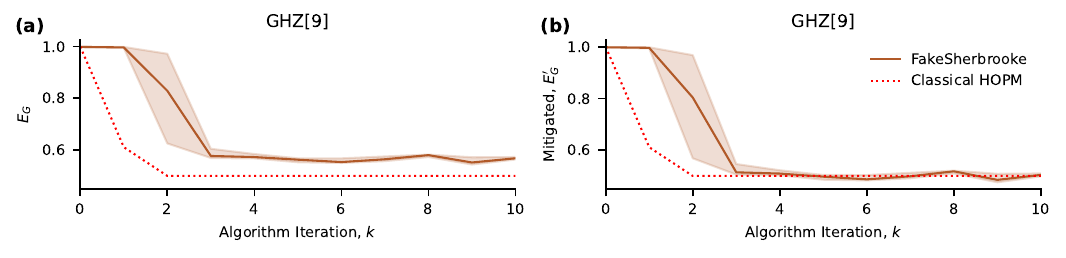}
    \caption{\textbf{Simulation results (a) for QHOPM on \GHZ{9} target state
    realistic IBM Qiskit FakeSherbrooke backend ($\AVG \approx 0.564$, 
    $\STD \approx 0.011$), and its mitigation (b) with $p \approx 0.003$ 
    ($\AVG \approx 0.498$, $\STD \approx 0.011$).}
    Solid lines represent the median geometric entanglement,
    $\GE{}$, over the results for different initial separable states and the
    shaded colours represent the interquartile range.
    The colour of the lines represents the noise model used for the
    simulation (see legend).
    Red dotted lines show the results of classical HOPM\@.}
  \label{fig:mitigate_other_noise_models-ghz9-realistic}
\end{figure*}

\begin{figure*}[ht]
  \centering
  \makebox[\textwidth][c]{
    \includegraphics{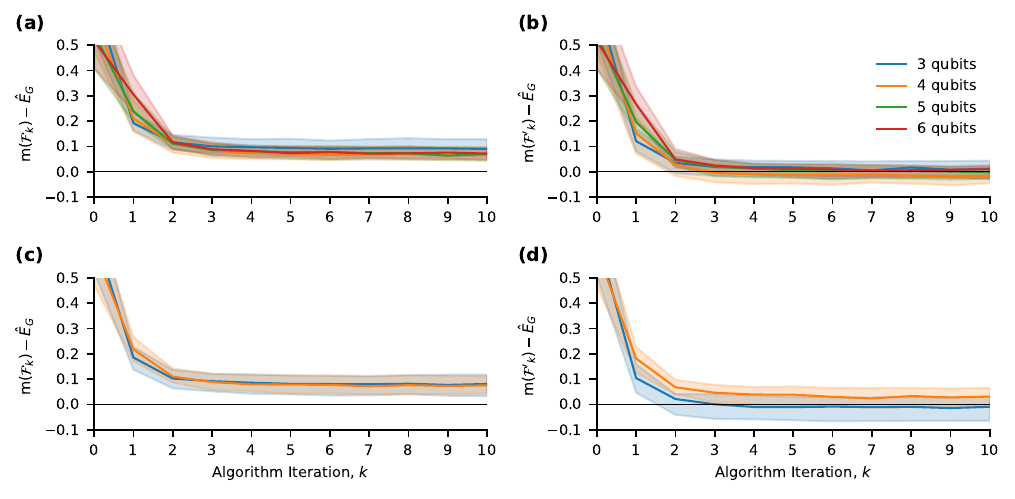}
  }
  \caption{
    \textbf{The error ($\medSetEGk - \trueGE$) of QHOPM
      approximating \GE{} for 100 different samples of \Random{n}
    with realistic noise models and their mitigation.}
    \textbf{(a)} and \textbf{(b)} show unmitigated and mitigated
    results, respectively, for \Random{n} with $n \in \{3,4,5,6\}$,
    simulated with the FakeSherbrooke Qiskit backend.
    \textbf{(c)} and \textbf{(d)} show the same for the FakeLima
    Qiskit backend ($n \in \{3,4\}$ since FakeLima imitates a 5 qubit device).
    Solid lines represent the median error over the initial
    separable states and the
    shaded colours represent the interquartile range.
    The colour of the lines represents the number of qubits (see legend).
  }
  \label{fig:mitigate_other_noise_models}
\end{figure*}

We also applied our mitigation technique to other GHZ states
(Supplemental Fig.~\ref{supfig:all_qubits_with_DN_mitigation_GHZ}(b),(d),(f)),
the W state (Supplemental
Figure~\ref{supfig:all_qubits_with_DN_mitigation_W}(b)),
some Cluster Ring states (Supplemental
Figure~\ref{supfig:all_qubits_with_DN_mitigation_Ring}(b),(d),(f)),
and some Random states (Supplemental
Figure~\ref{supfig:all_qubits_with_DN_mitigation_random}(b),(d),(f)).

\section{Discussion}
\label{sec:discussion}
In this work we presented QHOPM, an iterative quantum algorithm for
the approximation of
the geometric measure of entanglement of pure states on near term
quantum devices as well as
a simple noise mitigation technique to improve its experimental resolution.
QHOPM is the quantum adaptation of the well-known classical HOPM
approach~\cite{Shimoni2005, Most2010, Streltsov2011, Ni2014,
Hu2016, Zhang2020a} for rank-1 tensor approximation.
The convergence rate of QHOPM is given in Theorem~\ref{thrm:hopm-conv}
and it is based on the complex extension (see 
Appendix~\ref{app:hopm-convergence-proof}) of the
known HOPM convergence rate for real valued tensors~\cite{uschmajew2015new,Hu2018}.

While QHOPM does not converge differently than classical HOPM and
gives similar results (assuming a sufficient amount of shots and
relatively low noise rate) its potential utility lies in the
observation that it requires only $n+1$ qubits to compute the \GM{}
of a quantum state (on $n$ qubits) while classical HOPM naively
requires an exponential data structure (and so exponential runtime)
to work with quantum states.
We also show that the number of measurements required per algorithm
iteration scales linearly with $n$. In our work, to avoid the
problem of post-selection we have assumed Hadamard test-based
measurements, which have recently been successfully implemented on
real devices~\cite{Yoshioka2024}.

We have provided an analysis of the effect of statistical and SPAM noise
on QHOPM, however the analysis of incoherent noise is highly dependent on
the target state and the device the algorithm is executed on, especially for
possible non-Markovian effects. 
Because of this we have provided estimations on the scaling 
of the incoherent noise correction to QHOPM's results with 
the number of qubits in the second order approximation, which is
not tied to a particular device.
In simulations, we observed
that the effect of depolarising noise on QHOPM for the studied
noise rates ($p \leq 0.05$) and number of shots ($10^5$)
does not affect its convergence, but rather shifts the final result.
The same observation was made in simulations using more realistic
noise models implemented in IBM Qiskit's FakeLima and
FakeSherbrooke simulation backends.
To show potential for noise mitigation we have
successfully applied a ``proof-of-concept'' mitigation
technique, based on the work~\cite{Vovrosh2021}, which uses the 
same assumption of global depolarizing noise.
Similar approach was proposed in~\cite{Urbanek2021}, and both 
have shown reasonable results on real devices.
We have used this mitigation technique and analysed QHOPM's performance in this setting.
The noise rate for the unknown models was estimated using \GHZ{n}
as a reference state with a known \GM.
We note, that this mitigation approach is consistent if, say, each layer of the 
circuit (in terms of the Hadamard test-based measurement scheme) 
is twirled over Haar distribution (using, say, Clifford
2-design)~\cite{Dankert2009, Emerson2005}.
We expect that this mitigation procedure will not scale well due to the accumulation
of correlated errors and highly structured noise channels (e.g.~leakage noise, non-Markovian effects). 
The mitigation procedure in the studied realistic noise scenario for \GHZ{n} and \Random{n} states for $n=3,6,9$ has shown a significant improvement for the \GE, however for \Ring{n} and \W{3} states even at low qubit number the mitigation method struggled to give a reasonably close results (see Appendix~\ref{app:supplementary-sims}).
This shows that noise mitigation in principle can improve QHOPM's results, but one should use more sophisticated mitigation techniques to run it on NISQ devices.

To our knowledge, the only other quantum algorithms for calculating
geometric entanglement are variational algorithms where the ansatz
is a separable state~\cite{Zambrano2024, MunozMoller2022, Consiglio2022}.
We now compare to QHOPM the iVDGE algorithm~\cite{Zambrano2024} and
its predecessor VDGE~\cite{MunozMoller2022}. 
iVDGE is an improved version of VDGE approach changed 
for better barren plateau scaling.
In~\cite{MunozMoller2022, Zambrano2024} the depth of VDGE and iVDGE is constant, which is calculated excluding the depth of $U_\psi$. At the same time the measurement accuracy overhead scales exponentially with the number of qubits.
QHOPM utilises the Hadamard test procedure to avoid this overhead, sacrificing the depth, which (excluding $U_\psi$) is $\bigO{n}$ for an $n$-qubit state.
The number of gates for both VDGE/iVDGE and  QHOPM scales linearly with the number of qubits (again excluding $U_\psi$).
The authors of iVDGE have derived bounds on the variance of the cost function gradient ${\rm Var}[\partial I(\boldsymbol{\theta})/\partial \theta_i]$, which scales as $\bigO{n^{-2}}$, and for VDGE
as $\bigO{c^n}$, where $0<c<1$\footnote{In the derivations, the authors of~\cite{Zambrano2024} did not take into account the effects of statistical noise.}.
It can be shown (see Appendix~\ref{app:ivdge-scaling}) that
in the limits of large qubit numbers $n \gg 1$ and large iterations $k\gg 1$ the variance of the cost function gradient in terms of VDGE/iVDGE is approximately the average difference of the cost function update, which in QHOPM's case is $\lambda^{(k)} - \lambda^{(k-1)}$.
Using \meqref{eqn:hopm-compl-conv-rate-qubits} it is easy to show, that $\lambda^{(k)} - \lambda^{(k-1)}$ scales as $\bigO{n^2/k^2}$.
Comparing this with iVDGE and VDGE results for the variance, we see that HOPM 
and consequently QHOPM due to Theorem~\ref{theorem:qhopm}) scales better 
with the number of qubits. This can be confirmed by comparing iVDGE and VDGE
simulation results with ours.
iVDGE and VDGE converge more slowly than QHOPM, needing on the order of
$10^2$ iterations compared to QHOPM's $10^0$, with noise or without.
In a noiseless simulation (for Haar random target states)
with 8192 shots per infidelity evaluation
for 3 and 4 qubits, the median absolute error of iVDGE's best results for
each target state converges to approximately $10^{-4}$ (see Fig.~2
in~\cite{Zambrano2024}).
As the number of qubits increases to six,
the error for iVDGE and VDGE grows to approximately $10^{-2}$.
In similar simulations with noiseless QHOPM and $10^5$ shots,
the accuracy of the results seems less dependent on the number of
qubits and more on the number of shots, satisfying the Chernoff bound
(see Fig.~\ref{fig:all_qubits_random_error}).
So in the noiseless case, while iVDGE is more accurate with fewer
shots per measurement on small numbers of qubits, QHOPM can improve
its accuracy by increasing the number of shots w.r.t. Chernoff bound.
The authors of VDGE, use an optimizer (complex simultaneous
perturbation stochastic approximation) that is claimed to be robust
against noise.
We note that, in the presence of noise, QHOPM still converges to a
value and simple mitigation
techniques recover a number closer to the noiseless value.
When realistic noise is introduced, the accuracy of QHOPM's results 
for FakeLima IBM Qiskit backend
(see Fig.~\ref{fig:mitigate_other_noise_models}) are of the order
of VDGE results (see Fig.~6 in~\cite{MunozMoller2022}) for 
\texttt{ibm\_lima} IBM device.
The authors of iVDGE has also presented results for \GHZ{7} state
on the \texttt{ibm\_oslo} machine (see Fig.~6~\cite{Zambrano2024}), 
showing $\GE = 0.519 \pm 0.01$ after 300 iterations.
We could not use the FakeOslo backend (which simulates the \texttt{ibm\_oslo} device) 
 for \GHZ{7} since QHOPM would require 8 qubits for this problem but the \texttt{ibm\_oslo} device 
is limited to 7 qubits.
Taking into account that the given error rates for FakeSherbrooke
are similar to the error rates stated for the FakeOslo backend, and that 
the topologies of these devices are similar (\texttt{ibm\_oslo} -- H-shape; \texttt{ibm\_sherbrooke} -- heavy-hex), we can assume that results obtained on these devices are comparable.
\begin{figure*}[ht]
    \centering
    \includegraphics[width=\textwidth]{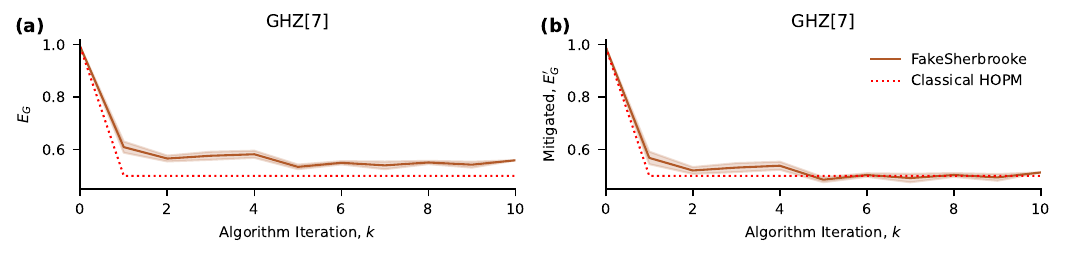}
    \caption{\textbf{Simulation results (a) for QHOPM on \GHZ{7} target state
    realistic IBM Qiskit FakeSherbrooke backend ($\AVG \approx 0.546$, $\STD \approx 0.01$), and its mitigation (b) with $p\approx 0.003$ ($\AVG \approx 0.499$, $\STD \approx 0.011$).}
    All notations and the simulation setup are the same as in Fig.~\ref{fig:mitigate_other_noise_models-ghz9-realistic}.
    }
  \label{fig:mitigate_other_noise_models-ghz7-realistic}
\end{figure*}
For the FakeSherbrooke IBM Qiskit backend
for \GHZ{7} QHOPM after 4--5 iterations approximates
$\GE = 0.546 \pm 0.01$ (see Fig.~\ref{fig:mitigate_other_noise_models-ghz7-realistic}(a)).
QHOPM converges much faster, but clearly gives worse $\GE$ estimations
than iVDGE\@.
After applying the chosen mitigation technique we were able
to get the correct values of $\GE$ within the $\STDm$ (see Fig.~\ref{fig:mitigate_other_noise_models-ghz7-realistic}(b)).
Like QHOPM, these variational approaches are sensitive to the
initial separable state, and one is never sure if the result
obtained is a local minimum or the global minimum. 

We observe that HOPM may be extended with a shift parameter to a
so-called shifted HOPM~\cite{Kolda2011, Hu2016} or Gauss-Seidel
method~\cite{Zhang2020a}.
In the Appendix~\ref{app:gauss-siedel} we show how we incorporate this
change in our algorithm.

An interesting future direction is to extend our method to
approximating geometric entanglement of mixed states~\cite{Wei2003}.
For example, the authors of~\cite{Streltsov2011} presented a method
for approximating \GM{} of a mixed state based on the connection to
the so-called revised \GM~\cite{Streltsov2010} and Uhlmann's
theorem~\cite{NielsenChuang}.
The classical implementation of the method given
in~\cite{Streltsov2011} may be naively modified to incorporate
techniques presented in the current work for some intermediate
steps, but the calculation will still be performed largely on the
classical side. Further work is needed to effectively incorporate
QHOPM with this method.

Finally, we observe that our algorithm is a quantum application of
the alternating least squares (ALS) method applied to the
particular system of non-linear equations for estimating rank-1
tensor approximation. The ALS method by itself is an area of
interest in classical machine learning~\cite{Zhou2008,Xu2019},
perhaps a variant of our approach would be of use in a suitable
problem domain.

\section*{Data Availability}
The code and data sets produced as part of this study are publicly available~\cite{zenodo_data}.

\section*{Acknowledgements}
Equal1 was funded by DTIF: QCoIr (Enterprise Ireland Project
Number: 166669/RR), and
EIC project QUENML Quantum-enhanced Machine Learning, 190105118.
Simone Patscheider was partly funded by the University of Trento
Masters program. Alessandra Bernardi was partially funded by GNSAGA
of INDAM, TensorDec Laboratory and by the European Union under
NextGenerationEU. PRIN 2022 Prot. n. 2022ZRRL4C$\_$004.
We also thank Nikolaos Petropoulos for useful discussions.

A.S. and N.M. developed the algorithm and wrote the text and simulations.
S.P. did the initial work on geometric entanglement and HOPM, and
wrote the initial draft text.
A. B. supervised S.P. and edited the manuscript text.
E. B. conceptualized the topic and edited the manuscript.

\bibliography{entanglement}

\newpage
\appendix

\section{Proof of HOPM sublinear convergence rate for complex-valued tensors}
\label{app:hopm-convergence-proof}

To obtain the sublinear convergence rate of HOPM, we will repeat the proof in~\cite{Hu2018} Theorem~3.2,  which was originally done for the real-valued tensors.
We will use the objective function $F: \mathbb{R}^{2m_1} \times \ldots \times \mathbb{R}^{2m_n} \to \mathbb{R}$ defined for HOPM for an $n$-order complex valued tensors $A \in \mathbb{C}^{m_1} \otimes \ldots \otimes \mathbb{C}^{m_n}$:
\begin{equation}\label{eqn:hopm-obj-func}
    F({\bf x}) \coloneq - \Re(A \times \complMap({\bf x})^*) + \sum_{i=1}^{n} \delta_{S_{2m_i-1}} ({\bf x}_i),
\end{equation}
where ${\bf x} = ({\bf x}_1, \ldots, {\bf x}_n)$, with ${\bf x}_i = (\Re(v_{i,0}), \Re(v_{i,1}), \ldots, \Im(v_{i,0}), \Im(v_{i,1}), \ldots)^{\rm T}$ being real-vector representations (decomplexifications; see~\cite{KostrikinBook}) of the complex vectors ${\bf v}_i = (v_{i,0}, v_{i,1}, \ldots, v_{i,m_i-1})^{\rm T} \in \mathbb{C}^{m_i}$; the vector-function $\complMap:\mathbb{R}^{2m_1} \times \ldots \times \mathbb{R}^{2m_n} \to \mathbb{C}^{m_1} \times \ldots \times \mathbb{C}^{m_n}$ returns the complex representation (complexification) of a given vector or a vector tuple: 
\[
\complMap({\bf x}) = \left(
\left(
\begin{matrix}
    {x}_{1,0} \\
    \vdots \\
    {x}_{1,m_1-1}
\end{matrix}
\right)
+ \imi \left(
\begin{matrix}
    {x}_{1,m_1} \\
    \vdots \\
    {x}_{1,2m_1-1}
\end{matrix}
\right),
\ldots ,
\left(
\begin{matrix}
    {x}_{n,0} \\
    \vdots \\
    {x}_{n,m_n-1}
\end{matrix}
\right)
+ \imi \left(
\begin{matrix}
    {x}_{n,m_n} \\
    \vdots \\
    {x}_{n,2m_n-1}
\end{matrix}
\right)
\right);
\]
$\delta_{S_{2m_i-1}}$ is an indicator function of the unit sphere $S_{2m_i-1}$ in $\mathbb{R}^{2m_i}$, defined as
\[
    \delta_{S_{2m_i-1}}({\bf a}) \coloneq \left\{ 
    \begin{matrix}
        0 & {\rm if} & {\bf a} \in \mathbb{R}^{2m_i}, \| {\bf a} \| = 1, \\
        +\infty & {\rm else}. &
    \end{matrix}
    \right.
\]
It is easy to see that the minimization of the semi-continuous polynomial function $F$ is equivalent to the minimization problem in Eq.~\eqref{eqn:gm_obj_func}.

Let $\lambda^{(k)}_i \coloneq \| {\bf u}_i^{(k)} \|  = | \langle {\bf u}_n^{(k)} , {{\bf v}_n^{(k)}} \rangle |$ ($\lambda ^{(k+1)}_0 \equiv \lambda ^{(k)}_n \equiv \lambda^{(k)}$) be a norm of the $i$th vector update on $k$th iteration in HOPM (see line~6 in Algorithm~\ref{alg:hopm}), where $\langle \cdot,\cdot \rangle$ and $\| \cdot \|$ are inner product and the $l^2$-norm on a complex Hilbert space $\mathbb{C}^{m_i}$. We will abuse these notations and use them further for the inner product and the norm on the real vector spaces also. 
In terms of the ALS method, which HOPM is built upon, for some $1 \leq i \leq n$ and $k>0$, the vector ${\bf v}_i^{(k)}$ is calculated as a solution to the given optimization problem of maximizing $\lambda$, having other vectors fixed (see Algorithm~\ref{alg:hopm}), thus the obtained value $\lambda^{(k)}_i$ should be an improvement over the previously calculated value $\lambda^{(k)}_{i-1}$, so $\lambda^{(k)}_i$ should be not less than $\lambda^{(k)}_{i-1}$. As it is true for any $1 \leq i \leq n$ and $k>0$, we see that the sequence $\{ \lambda^{(1)}_1, \ldots, \lambda^{(1)}_n, \ldots, \lambda^{(k)}_1, \ldots, \lambda^{(k)}_n, \ldots \}$ is monotonically convergent:
\begin{equation}\label{eq:hopm-monotonic}
    0 < \lambda^{(1)}_1 \leq \ldots \leq \lambda^{(1)}_n \leq \ldots \leq \lambda^{(k)}_1 \leq \ldots \leq \lambda^{(k)}_n \leq \ldots \leq |A|,
\end{equation}
where the last inequality comes from Cauchy-Schwartz's inequality for the line~6 in Algorithm~\ref{alg:hopm}.

Let $G({\bf x}) \coloneq A \times \complMap({\bf x})^*$, and 
\begin{equation}\label{eqn:complixified-grad}
    \complMap{(\nabla_i \Re (G({\bf x}^{(k,i)})))} 
    = A \times_i ({\bf v}^{(k)}_1,\ldots,{\bf v}^{(k)}_{i-1}, {\bf v}^{(k-1)}_{i+1}, \ldots, {\bf v}^{(k-1)}_n)^* 
    \equiv A \circ_i \tau_i({\bf v}^{(k,i)})^*,
\end{equation}
where $\nabla_i$ denote a partial gradient with respect to the $i$th vector ${\bf x}_i^{(k)}$ in the tuple ${\bf x}^{(k,i)}$ for any $k$; $\tau_i({\bf x}) = {\bf x}_1 \otimes \ldots \otimes {\bf x}_{i-1} \otimes {\bf x}_{i+1} \otimes \ldots \otimes {\bf x}_{n}$ is a function, which maps the vector tuple ${\bf x}$ into a tensor product space, skipping the $i$th vector; ${\bf x}^{(k,i)} = ({\bf x}^{(k)}_1,\ldots,{\bf x}^{(k)}_{i-1}, {\bf x}^{(k)}_{i}, {\bf x}^{(k-1)}_{i+1}, \ldots, {\bf x}^{(k-1)}_n)$; the $\circ_i$ sign denote a tensor contraction over all indices skipping the $i$th index of the left-hand side tensor.
For $\Re(G({\bf x}))$ we also have:
\begin{equation}\label{eqn:hopm-gx-eq-grdaig}
    \nabla_i \Re (G({\bf x}^{(k,i)}))  = \Re (G({\bf x}^{(k,i)})) {\bf x}^{(k)}_i 
    = \Re (\langle {\bf u}_i^{(k)} , \complMap({\bf x}^{(k)}_i) \rangle ) {\bf x}^{(k)}_i
    = \lambda^{(k)}_i {\bf x}^{(k)}_i.
\end{equation}
The left-hand side and the right-hand side of this equation define a  system of non-linear equations $\nabla_i \Re (G({\bf x}^{(k,i)})) = \lambda^{(k)}_i {\bf x}^{(k)}_i$ w.r.t. real-valued vectors ${\bf x}$ and $\lambda^{(k)}_i = |\Re (\langle {\bf u}_i^{(k)} , \complMap({\bf x}^{(k)}_i) \rangle )|$  instead of the system \meqref{eqn:nonlinear-system} w.r.t. ${\bf v}$ and $\lambda^{(k)}_i = |\langle {\bf u}_i^{(k)} , \complMap({\bf x}^{(k)}_i) \rangle |$. 
These systems of equations are equivalent, since they target the same cost function w.r.t. the same arguments only in different representations. This equivalence can be explicitly obtained by applying the map $\complMap$ to \meqref{eqn:hopm-gx-eq-grdaig} and take into account \meqref{eqn:complixified-grad}.

\begin{lemma}\label{lemma:hopm-lambda-inequality}
    For any $k>0$ the following inequality holds:
    \[
        \lambda^{(k+1)} - \lambda^{(k)} \geq  \frac{\lambda^{(1)}_1}{2} \| {\bf x}^{(k+1)} - {\bf x}^{(k)} \|^2.
    \]
\end{lemma}
\begin{proof}
    We start by presenting the difference $\lambda^{(k+1)} - \lambda^{(k)}$ as the telescopic sum:
    \begin{equation}\label{eq:hopm-lambda-telescopic}
        \lambda^{(k+1)} - \lambda^{(k)} = \sum_{i=1}^n (\lambda^{(k+1)}_i - \lambda^{(k+1)}_{i-1}).
    \end{equation}
    For each element in the sum with all ${\bf x}_i^{(k)} \in S_{2m_i - 1}$ the following holds:
    \begin{equation}\label{eq:hopm-lambda-telescopic1}
    \begin{split}
        \lambda^{(k+1)}_i - \lambda^{(k+1)}_{i-1}
        &= \Re(G({\bf x}^{(k+1,i)})) - \Re(G({\bf x}^{(k+1,i-1)})) \\
        &= \langle \nabla_i \Re(G({\bf x}^{(k+1)}_i)), ({\bf x}^{(k+1)}_i - {\bf x}^{(k)}_{i}) \rangle \\
        &= \Re(G({\bf x}^{(k+1,i)})) \langle {\bf x}^{(k+1)}_i, ({\bf x}^{(k+1)}_i - {\bf x}^{(k)}_{i}) \rangle \\
        &= \frac{\lambda^{(k+1)}_i}{2} \| {\bf x}^{(k+1)}_i - {\bf x}^{(k)}_{i} \|^2, 
    \end{split}
    \end{equation}
    where we have taken into account in the second line  
    \[
        \Re(G({\bf x}^{(k+1,i-1)})) = \Re(\langle {\bf u}_{i-1}^{(k+1)}, \complMap({\bf x}_{i-1}^{(k+1)}) \rangle) 
        = \Re(\langle {\bf u}_{i}^{(k+1)}, \complMap({\bf x}_{i}^{(k)}) \rangle) 
        = \langle \nabla_i \Re( G({\bf x}^{(k+1,i)}) ), {\bf x}^{(k)}_i \rangle,
    \] 
    and in the fourth line 
    \[
        {\bf x}^{(k+1)}_i = \frac{1}{2}({\bf x}^{(k+1)}_i - {\bf x}^{(k)}_i) + \frac{1}{2}({\bf x}^{(k+1)}_i + {\bf x}^{(k)}_i).
    \]

    Substituting the result in \meqref{eq:hopm-lambda-telescopic1} back into \meqref{eq:hopm-lambda-telescopic} and taking into account monotonicity of the ALS method \meqref{eq:hopm-monotonic} we get:
    \[
        \lambda^{(k+1)} - \lambda^{(k)} 
        = \sum_{i=1}^n \frac{\lambda^{(k+1)}_i}{2} \| {\bf x}^{(k+1)}_i - {\bf x}^{(k)}_{i} \|^2
        \geq \sum_{i=1}^n \frac{\lambda^{(1)}_1}{2} \| {\bf x}^{(k+1)}_i - {\bf x}^{(k)}_{i} \|^2
        = \frac{\lambda^{(1)}_1}{2} \| {\bf x}^{(k+1)} - {\bf x}^{(k)} \|^2,
    \]
    which proves the claim of the Lemma.
\end{proof}

The results of Lemma~\ref{lemma:hopm-lambda-inequality} together with Eq.~\eqref{eqn:hopm-obj-func} gives
\begin{equation}\label{eqn:hopm-obj-func-ineq}
    F({\bf x}^{(k)}) - F({\bf x}^{(k+1)}) \geq \frac{\lambda^{(1)}_1}{2} \| {\bf x}^{(k)} - {\bf x}^{(k+1)} \|^2.
\end{equation}
    
\begin{lemma}\label{lemma:hopm-obj-grad-ineq}
    The objective function $F$ defined in \meqref{eqn:hopm-obj-func} for any $k>0$ satisfies:
    \begin{equation}\label{eqn:hopm-objf-ineq}
        F({\bf x}^{(k)}) - F({\bf x}^{(k+1)}) \geq \frac{\lambda_1^{(1)}}{2nL^2} \| \nabla \Re(G({\bf x}^{(k)})) - \lambda^{(k)} {\bf x}^{(k)} \|^2,
    \end{equation}
    where $L=(2\sqrt{n}+1)\|A\| > 0$.
\end{lemma}
\begin{proof}

    For any $k\geq0$ we have:
    \begin{equation}\label{eqn:hopm-grdi-ineq}
    \begin{split}
        \| \nabla_i \Re (G({\bf x}^{(k)}) ) - \lambda^{(k)} {\bf x}_{i}^{(k)} \|
        =& \| \nabla_i \Re (G({\bf x}^{(k)})) - \nabla_i \Re( G({\bf x}^{(k+1,i)}) ) \\
        &+ \lambda^{(k+1)}_i {\bf x}_{i}^{(k+1)} - \lambda^{(k+1)}_i {\bf x}_{i}^{(k)} 
        + \lambda^{(k+1)}_i {\bf x}_{i}^{(k)} - \lambda^{(k)} {\bf x}_{i}^{(k)} \| \\
        \leq& \| \nabla_i \Re( G({\bf x}^{(k)}) ) - \nabla_i \Re( G({\bf x}^{(k+1,i)}) ) \| \\
        &+ \| \lambda^{(k+1)}_i ({\bf x}_{i}^{(k+1)} - {\bf x}_{i}^{(k)}) \| 
        + | \lambda^{(k+1)}_i - \lambda^{(k)}  | \| {\bf x}_{i}^{(k)} \|.
    \end{split}
    \end{equation}
    For the difference of the gradients we have:
    \begin{equation}\label{eqn:hopm-real-grad-ineq}
    \begin{split}
        \| \nabla_i \Re(G({\bf x}^{(k)})) - \nabla_i \Re(G({\bf x}^{(k+1,i)})) \| 
        =& \| \nabla_i \Re( G({\bf x}^{(k)}) -  G({\bf x}^{(k+1,i)})) \| \\
        =& \| A \circ_i (\tau_i(\complMap({\bf x}^{(k)})^*) - \tau_i(\complMap({\bf x}^{(k+1,i)})^*)) \| \\
        \leq& \| A \| \| \tau_i(\complMap({\bf x}^{(k)})^*) - \tau_i(\complMap({\bf x}^{(k+1,i)})^*) \| \\
        \leq& \sqrt{n} \| A \| \| \complMap({\bf x}^{(k)})^* - \complMap({\bf x}^{(k+1,i)})^* \| \\
        =& \sqrt{n} \| A \| \| {\bf x}^{(k)} - {\bf x}^{(k+1,i)} \|,
    \end{split}
    \end{equation}
    where in the last inequality we have used Lemma~2.1 from \cite{Hu2018}:
    \[
        \| \tau_i({\bf a}) - \tau_i({\bf b}) \| 
        \leq \sqrt{n} \| {\bf a} - {\bf b} \|,
    \]
    which holds for any ${\bf a}, {\bf b} \in S_{d_1-1}\times \ldots \times S_{d_n - 1}$.

    Substituting the result of \meqref{eqn:hopm-real-grad-ineq} into \meqref{eqn:hopm-grdi-ineq} and taking into account, that $\lambda_i^{(k+1)} = | A \circ_i \tau_i(\complMap({\bf x}^{(k+1,i)})) |$, we have:
    \begin{equation}\label{eqn:hopm-grdi-ineq-final}
    \begin{split}
        \| \nabla_i \Re( G({\bf x}^{(k)}) ) - \lambda^{(k)} {\bf x}_{i}^{(k)} \|
        \leq& 2 \sqrt{n} \| A \| \| {\bf x}^{(k+1,i)} - {\bf x}^{(k)} \|
        + \| A \| \| {\bf x}^{(k+1,i)} - {\bf x}^{(k)} \| \\
        \leq& L \| {\bf x}^{(k+1)} - {\bf x}^{(k)} \|,
    \end{split}
    \end{equation}
    where $L=(2\sqrt{n}+1)\|A\| > 0$. This gives us
    \[
        \| \nabla \Re( G({\bf x}^{(k)}) ) - \lambda^{(k)} {\bf x}^{(k)} \|^2
        = \sum_{i=1}^n \| \nabla_i \Re( G({\bf x}^{(k)}) ) - \lambda^{(k)} {\bf x}_{i}^{(k)} \|^2
        \leq nL^2 \| {\bf x}^{(k)} - {\bf x}^{(k+1)} \|^2.
    \]
    Using this result together with \meqref{eqn:hopm-obj-func-ineq} gives
    \[
        F({\bf x}^{(k)}) - F({\bf x}^{(k+1)}) \geq \frac{\lambda_1^{(1)}}{2nL^2} \| \nabla \Re( G({\bf x}^{(k)}) ) - \lambda^{(k)} {\bf x}^{(k)} \|^2,
    \]
    which finishes the proof.
\end{proof}

\begin{theorem}[Sublinear convergence rate of HOPM for complex valued tensors.]
\label{thrm:hopm-sublin-conv}
    Let $\complMap({\bf x}^{(k)})$ and $\lambda^{(k)} = |A \times \complMap({\bf x}^{(k)})^* |$ be generated by Algorithm~\ref{alg:hopm} for a given tensor $A \in \mathbb{C}^{m_1} \otimes \ldots \otimes \mathbb{C}^{m_n}$ at the $k$th iteration, and let $p = n(3n-3)^{2M}$, where $M = \sum_{i=1}^n m_i$. Then there exists $B>0$ s.t. the following holds:
    \begin{equation}\label{eqn:hopm-compl-conv-rate}
        \hat{\lambda} - \lambda^{(k)} \leq B{\left( \frac{p-2}{n^2 p}k \right)^{-\frac{p}{p-2}} },
    \end{equation}
    where $\hat{\lambda}$ is the entanglement eigenvalue for the tensor $A$.
\end{theorem}
\begin{proof}
    Recall that if $\complMap(\hat{{\bf x}})$ is an eigenvector tuple of the tensor $A$ in the sense of RTA, $\hat{\lambda} \hat{\bf x}_i = \Re( G(\hat{\bf x})) \hat{\bf x}_i = \nabla_i \Re( G(\hat{\bf x}) )$ holds for any $i=1,\ldots,n$. Define a function $H: (\mathbb{R}^{2m_1} \times \ldots \times \mathbb{R}^{2m_n})\times \mathbb{R} \to \mathbb{R}$:
    \[
        H({\bf x}, \mu) = -\Re(G({\bf x})) + \mu \sum_{i=1}^n (\| {\bf x}_i \|^2 - 1)
        = -\Re(A \times \complMap({\bf x})^* ) + \mu \sum_{i=1}^n (\| {\bf x}_i \|^2 - 1).
    \]
    Let $\hat{\mu} = |G(\hat{\bf x})|/2$ and let $\hat{H}({\bf x}, \mu) \coloneq H({\bf x}, \mu) - H(\hat{\bf x}, \hat{\mu})$. It is clear that $\hat{H}$ is a real polynomial on $\mathbb{R}^{2M + 1}$ of degree $n$ with $\hat{H}(\hat{\bf x}, \hat{\mu}) = 0$ and $\nabla \hat{H}(\hat{\bf x}, \hat{\mu}) = 0$:
    \begin{equation*}
        \nabla\hat{H}(\hat{\bf x}, \hat{\mu}) 
        = \nabla H(\hat{\bf x}, \hat{\mu}) 
        = (2\hat{\mu}\hat{\bf x}_1 - \nabla_1 \Re( G(\hat{\bf x})), \ldots,  2\hat{\mu}\hat{\bf x}_1 - \nabla_n \Re(G(\hat{\bf x})))
        = {\bf 0}.
    \end{equation*}
    Applying the {\L}ojasiewicz inequality (Theorem~4.2 in~\cite{DAcunto2005}) it follows that there exist $c, \epsilon >0$ such that 
    \[
        \| \nabla \hat{H}({\bf x}, \mu) \| \geq c | \hat{H}({\bf x}, \mu) |^r, 
        \quad 
        r = \frac{p-1}{p}
    \]
    for all $\| ({\bf x}, \mu) - (\hat{\bf x}, \hat{\mu}) \| \leq \epsilon$. 
    Then for all ${\bf x} \in S_{2m_1-1}\times\ldots\times S_{2m_n-1}$ and $\mu \in \mathbb{R}$ with $\| ({\bf x}, \mu) - (\hat{\bf x}, \hat{\mu}) \| \leq \epsilon$, we get
    \[
        \| -\nabla \Re(G({\bf x})) + 2\mu {\bf x} \|^2 \geq c^2 (H(\hat{\bf x}, \hat{\mu}) - H({\bf x}, \mu))^{2r}.
    \]
    Let $\mu = \Re(G({\bf x}))/2$, then for all ${\bf x} \in S_{2m_1-1}\times\ldots\times S_{2m_n-1}$ with $\| {\bf x} - \hat{\bf x} \| \leq \epsilon$ for some $\epsilon>0$, we have
    \begin{equation}\label{eqn:hopm-grad-F-ineq}
        \| -\nabla\Re(G({\bf x})) + \Re(G({\bf x})) {\bf x} \|^2 \geq c^2 (H(\hat{\bf x}, \hat{\mu}) - H({\bf x}, \mu))^{2r} = c^2 (F({\bf x}) - F(\hat{\bf x}))^{2r}.
    \end{equation}
    As Algorithm~\ref{alg:hopm} converges ${\bf x}^{(k)} \to \hat{\bf x}$ (see~\cite{Zhang2020a}), there exists $k_0$, such that for all $k>k_0$, $\| {\bf x}^{(k)} - \hat{\bf x} \|<\epsilon$. This allows us to use \meqref{eqn:hopm-grad-F-ineq} together with Lemma~\ref{lemma:hopm-obj-grad-ineq} and obtain
    \begin{equation}\label{eqn:hopm-conv-obj-FF-ineq}
        F({\bf x}^{(k)}) - F({\bf x}^{(k+1)}) \geq \frac{\lambda^{(1)}_1 c^2}{2nL^2} (F({\bf x}^{(k)}) - F(\hat{\bf x}))^{2r}.
    \end{equation}

    Denote $\beta_k = (F({\bf x}^{(k)}) - F(\hat{\bf x})) \geq 0$ and $C = \frac{\lambda^{(1)}_1 c^2}{2nL^2}$. Then, 
    \[
        \beta_k - \beta_{k+1} \geq C \beta_k^{2r} = C h(\beta_k)^{-1},
    \]
    where $h(a) = a^{-2r}$. Note that $h(x) = f'(x)$ with $f(x) = \frac{x^{1-2r}}{1-2r}$ and $h(x)$ is non-increasing on the positive real numbers $\mathbb{R}_+$. This allows us to get the following chain of inequalities:
    \[
        C \leq h(\beta_k)(\beta_k - \beta_{k+1}) \leq \int\limits_{\beta_{k+1}}^{\beta_{k}} h(x)dx = f(\beta_k) - f(\beta_{k+1}) = \frac{1}{2r-1} (\beta_{k+1}^{1-2r} - \beta_{k}^{1-2r}),
    \]
    which gives
    \[
        \beta_{k}^{1-2r} \geq C(2r-1) + \beta_{k-1}^{1-2r} 
        \geq \ldots \geq 
        C(2r-1)(k-k_0) + \beta_{k_0}^{1-2r}.
    \]
    Hence, taking into account the definition of $C>0$, there exists $B>0$ such that for all $k \geq k_0$ we have 
    \[
        0 \leq \beta_{k} \leq B\left(\left( \frac{p-2}{n^2 p}k \right)^{-\frac{p}{p-2}} \right).
    \]
    Note that $\beta_{k} = F({\bf x}^{(k)}) - F(\hat{\bf x}) = \hat{\lambda} - \lambda^{(k)} \geq 0$, then for all $k > k_0$ we have:
    \[
        \hat{\lambda} - \lambda^{(k)} \leq B{\left( \frac{p-2}{n^2 p}k \right)^{-\frac{p}{p-2}} }.
    \]
\end{proof}

\section{Connection of the variance of the cost function gradient in VDGE/iVDGE methods with QHOPM}
\label{app:ivdge-scaling}

A cost function $I(\boldsymbol{\theta}) = I(\theta_1, \ldots, \theta_n)$ in terms of a standard gradient descent method after update of a subset of its parameters $\boldsymbol{\theta}' = (\theta_1,\ldots, \theta_m)$ ($m\leq n$) changes as follows:
\begin{equation}\label{eqn:cost-func-change}
    I(\boldsymbol{\theta}' - \eta \nabla_{\boldsymbol{\theta}'} I, \theta_{m+1}, \ldots, \theta_n) - I(\boldsymbol{\theta}) 
    \approx -\eta |\nabla_{\boldsymbol{\theta}'} I|^2 + \ldots
    = -\eta \sum_{i=1}^m \left(\frac{I(\boldsymbol{\theta})}{\partial \theta_i} \right)^2 + \ldots,
\end{equation}
where $\eta \in \mathbb{R}_+$ is usually referred to as learning rate.
Recall in complexity analysis we consider the case where qubit number $n \to \infty$ and iteration $k \to \infty$. In these limits, we can neglect higher-order terms in the decomposition in \meqref{eqn:cost-func-change}. After averaging over the parameters $\boldsymbol{\theta}'$ we get 
\begin{equation}\label{eqn:cost-func-change-avg}
    \mathbb{E}_{\boldsymbol{\theta}'}[I(\boldsymbol{\theta}' - \eta \nabla_{\boldsymbol{\theta}'} I, \theta_{m+1}, \ldots, \theta_n) - I(\boldsymbol{\theta})] 
    \approx -\eta m \mathbb{E}_{{\theta}_i}\left[\left(\frac{I(\boldsymbol{\theta})}{\partial \theta_i} \right)^2\right] .
\end{equation}
In the paper on iVDGE~\cite{Zambrano2024} it was shown that the average of the gradient over any angle is zero in the case of VDGE and iVDGE ($\mathbb{E}_{\theta_i}[\partial I(\boldsymbol{\theta})/\partial\theta_i]=0$ (see \meqref{eq:hopm-lambda-telescopic} and \meqref{eqn:hopm-real-grad-ineq} in~\cite{Zambrano2024}). 
So for VDGE and iVDGE the difference in \meqref{eqn:cost-func-change-avg} is the variance of the gradient of the cost function ($\mathrm{Var[\partial I(\boldsymbol{\theta})/\partial\theta_i]}$) with $(-\eta m)$ prefactor.

In case of QHOPM, the cost function is defined in \meqref{eqn:hopm-obj-func}
and for normalized vectors on the $k$-th iteration it equals to the negative eigenvalue $\lambda^{(k)}$: $F({\bf x}^{(k)}) = -\lambda^{(k)}$. 
Thus, a difference of eigenvalues on the $k$-th and $(k-1)$-th iterations is equivalent to the difference of cost functions, when two parameters (two angles for $n$-th qubit) were updated.  For VDGE and iVDGE in the limit of large $n$ and $k$ this corresponds to the variance of the cost function gradient with $m=2$ as shown above.

\section{Implementation of Gauss-Seidel method and Shifted HOPM algorithms}
\label{app:gauss-siedel}
The approach presented in this paper allows us to also implement the
Gauss-Seidel method (GSM), which (as authors of the work~\cite{Zhang2020a}
claim) should converge to the global minimum with higher probability.
The difference with HOPM is in line 6 of Algorithm~\ref{alg:hopm}:
it is changed to
\begin{equation}\label{eqn:zhang-line6}
  {\mathbf{u}}_i^{(k+1)}= \lambda^{(k)} T_\psi \times_{{i}} \left(
    {\mathbf{v}_1^{(k+1)}}^\ast, \ldots
    ,{\mathbf{v}_{i-1}^{(k+1)}}^\ast, {\mathbf{v}_{i+1}^{(k)}}^\ast,
  \ldots, {\mathbf{v}_n^{(k)}}^\ast \right) + \beta {\mathbf{v}}_i^{(k)},
\end{equation}
where $0 < \beta \in \mathbb{R} $ is a free parameter of the algorithm.

To implement GSM the only change to our algorithm is in the
classical step where we
recover the angles
$(\check{\vartheta}_{i}^{(k+1)}, \check{\varphi}_{i}^{(k+1)})$
using one-qubit tomography.
Instead of \meqref{eq:tomography-system}
we need to solve the following system of equations to find
$({\vartheta}_i^{(k+1)}, {\varphi}_i^{(k+1)})$,
\begin{equation}\label{eq:1q-zhang33-tomogr}
  \begin{aligned}
    A^2  \cos& ^2 \frac{\vartheta_i^{(k+1)}}{2} =
    \left| \lambda^{(k)} e^{-i \check{\varphi}_{i}^{(k+1)}/2}
    \cos\frac{\check{\vartheta}_{i}^{(k+1)}}{2} \right.
    + \left. \beta e^{-i \varphi_i^{(k)}/2} \cos
    \frac{\vartheta_i^{(k)}}{2} \right|^2;
    \\
    A^2  \sin & \vartheta_i^{(k+1)}  \sin \varphi_i^{(k+1)} =
    (\lambda^{(k)})^2 \sin \check{\vartheta}_{i}^{(k+1)} \sin
    \check{\varphi}_{i}^{(k+1)} \\
    &+ \beta^2 \sin \vartheta_i^{(k)} \sin \varphi_i^{(k)}
    + 2\beta \lambda^{(k)} \sin \frac{\check{\vartheta}_{i}^{(k+1)}
    + \vartheta_i^{(k)}}{2} \sin\frac{\check{\varphi}_{i}^{(k+1)} +
    \varphi_i^{(k)}}{2};                   \\
    A^2  \sin & \vartheta_i^{(k+1)}  \cos \varphi_i^{(k+1)}    =
    (\lambda^{(k)})^2 \sin \check{\vartheta}_{i}^{(k+1)} \cos
    \check{\varphi}_{i}^{(k+1)} \\
    &+ \beta^2 \sin \vartheta_i^{(k)} \cos \varphi_i^{(k)}
    + 2\beta \lambda^{(k)} \sin \frac{\check{\vartheta}_{i}^{(k+1)}
    + \vartheta_i^{(k)}}{2} \cos\frac{\check{\varphi}_{i}^{(k+1)} +
    \varphi_i^{(k)}}{2},                   \\
  \end{aligned}
\end{equation}
where
\[
A^2 = (\lambda^{(k)})^2+\beta^2 + 2\beta \lambda^{(k)} \cos
\frac{\check{\vartheta}_{i}^{(k+1)} - \vartheta_i^{(k)}}{2} \cos
\frac{\check{\varphi}_{i}^{(k+1)} - \varphi^{(k)}}{2}
\]
is a normalization coefficient.

Note, that GSM is similar to the so-called shifted HOPM (SHOPM)
algorithm~\cite{Kolda2011, Hu2016}, with the only difference being
that SHOPM does not have the current eigenvalue estimate
$\lambda^{(k)}$ as a prefactor on the right-hand side of
\meqref{eqn:zhang-line6}. Thus to get the SHOPM quantum-classical
hybrid implementation one needs to set $\lambda^{(k)}$ in
\meqref{eq:1q-zhang33-tomogr} to one.

\section{Supplemental simulation results}
\label{app:supplementary-sims}

\begin{figure*}
  \makebox[\textwidth][c]{
    \includegraphics{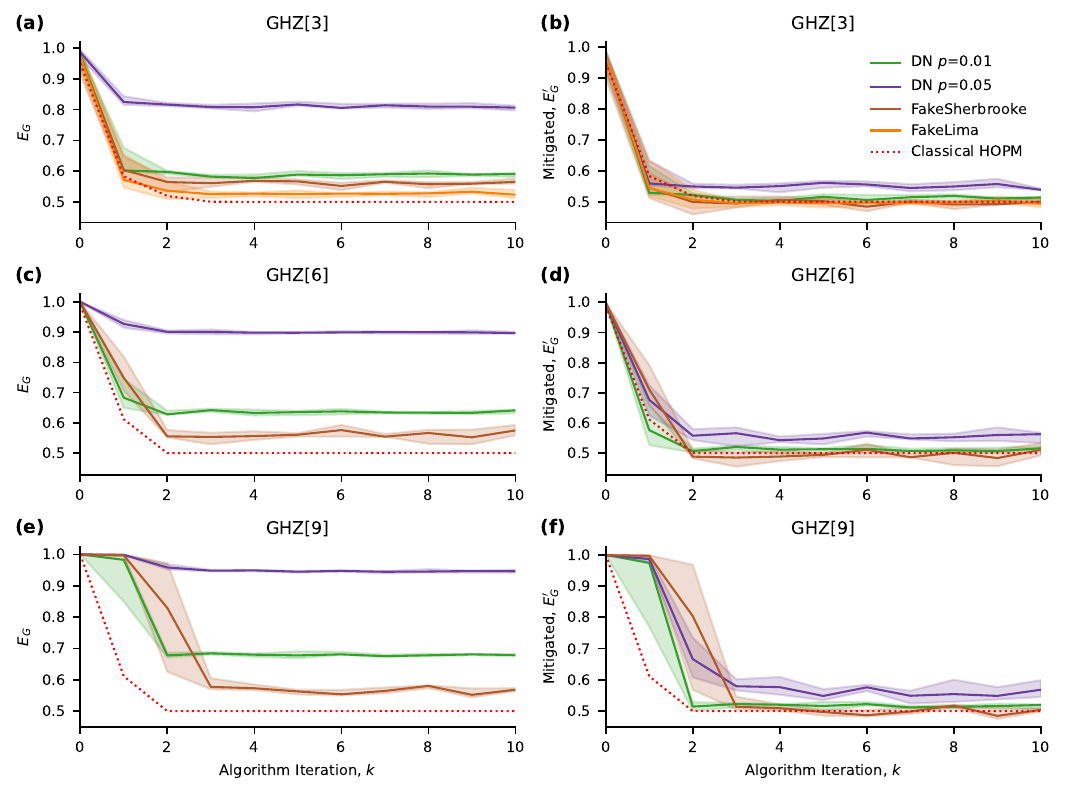}
  }
  \caption{\textbf{Simulation and mitigation results for the QHOPM
    algorithm on GHZ states.}
    Row \textbf{(a)} \textbf{(b)}  show \GHZ{3},
    \textbf{(c)} \textbf{(d)}  show \GHZ{6},
    \textbf{(e)} \textbf{(f)}  show \GHZ{9}.
    Column \textbf{(a)}  \textbf{(c)} \textbf{(e)}  shows the
    effect of noise on convergence.
    Column \textbf{(b)}  \textbf{(d)} \textbf{(f)}  shows the
    effects of mitigation on the noisy simulation.
    Each simulation was run 10 times (with the same 10 random initial
    separable states) with $1\times 10^5$ shots per each measurement.
    Solid lines represent the mean geometric entanglement, $\GE{}$,
    and the lightly
    shaded colours represent the standard deviation.
    The colour of the lines represents the noise model used for the
    simulation (see legend).
    Red dotted lines show results of classical HOPM\@.
  }
  \label{supfig:all_qubits_with_DN_mitigation_GHZ}
\end{figure*}

\begin{figure*}
  \makebox[\textwidth][c]{
    \includegraphics{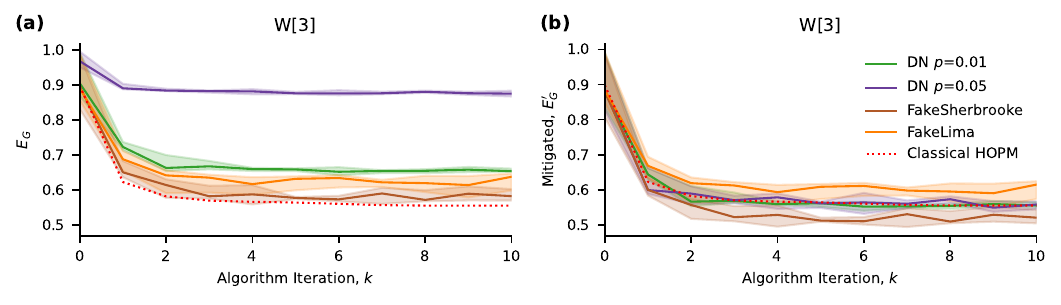}
  }
  \caption{\textbf{Simulation and mitigation results for the QHOPM
    algorithm on the \W{3} state.}
    \textbf{(a)} shows the effect of noise on convergence.
    \textbf{(b)} shows the effects of mitigation on the noisy simulation.
    $\trueGE = 5/9 \approx 0.554$.
    Each simulation was run 10 times (with the same 10 random initial
    separable states) with $1\times 10^5$ shots per each measurement.
    Solid lines represent the mean geometric entanglement, $\GE{}$,
    and the lightly
    shaded colours represent the standard deviation.
    The colour of the lines represents the noise model used for the
    simulation (see legend).
    Red dotted lines show results of classical HOPM\@.
  }
  \label{supfig:all_qubits_with_DN_mitigation_W}
\end{figure*}

\begin{figure*}
  \makebox[\textwidth][c]{
    \includegraphics{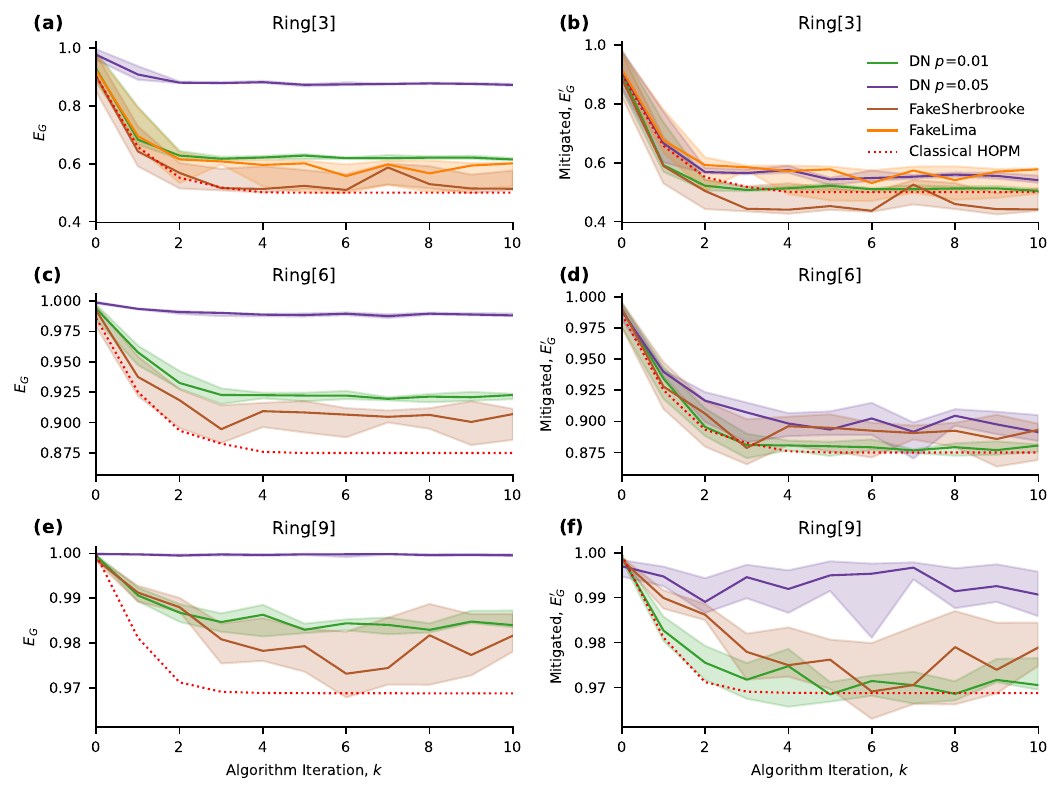}
  }
  \caption{\textbf{Simulation and mitigation results for the QHOPM
    algorithm on Ring states.}
    Row \textbf{(a)} \textbf{(b)}  show \Ring{3},
    \textbf{(c)} \textbf{(d)}  show \Ring{6},
    \textbf{(e)} \textbf{(f)}  show \Ring{9}.
    Column \textbf{(a)}  \textbf{(c)} \textbf{(e)}  shows the
    effect of noise on convergence.
    Column \textbf{(b)}  \textbf{(d)} \textbf{(f)}  shows the
    effects of mitigation on the noisy simulation.
    Each simulation was run 10 times (with the same 10 random initial
    separable states) with $1\times 10^5$ shots per each measurement.
    Solid lines represent the mean geometric entanglement, $\GE{}$,
    and the lightly
    shaded colours represent the standard deviation.
    The colour of the lines represents the noise model used for the
    simulation (see legend).
    Red dotted lines show results of classical HOPM\@.
  }
  \label{supfig:all_qubits_with_DN_mitigation_Ring}
\end{figure*}

\begin{figure*}
  \makebox[\textwidth][c]{
    \includegraphics{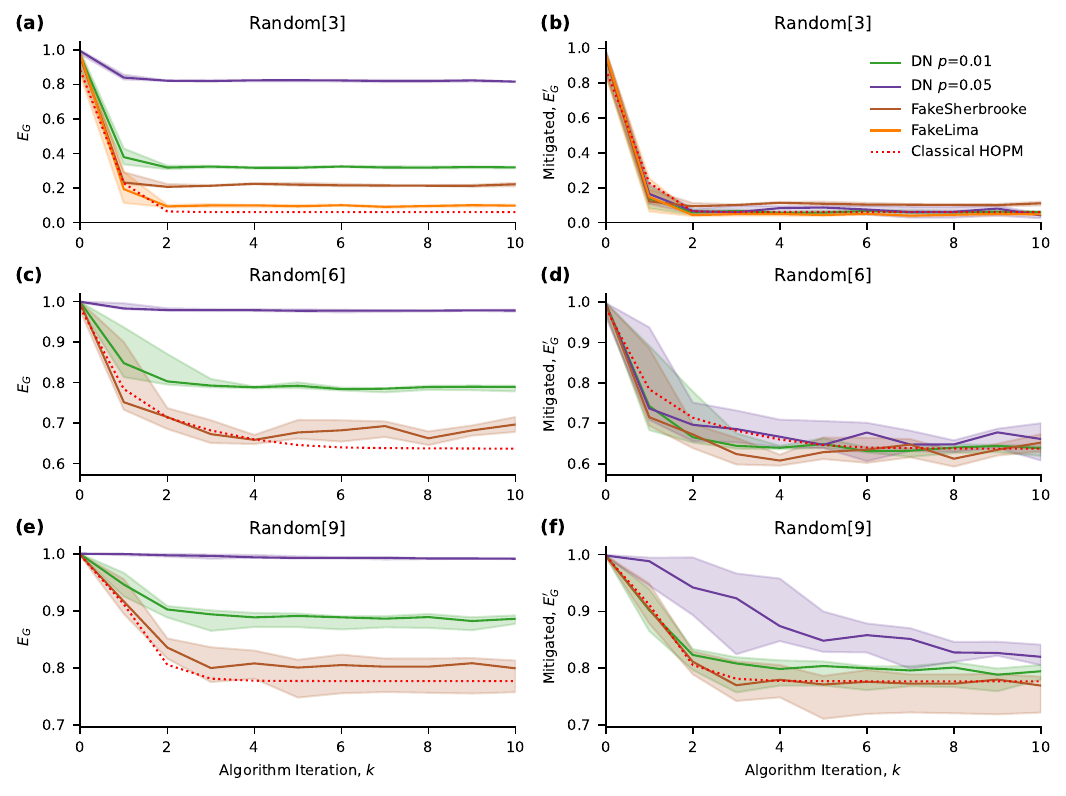}
  }
  \caption{\textbf{Simulation and mitigation results for the QHOPM
    algorithm on Random states.}
    Row \textbf{(a)} \textbf{(b)}  show \Random{3},
    \textbf{(c)} \textbf{(d)}  show \Random{6},
    \textbf{(e)} \textbf{(f)}  show \Random{9}.
    Column \textbf{(a)}  \textbf{(c)} \textbf{(e)}  shows the
    effect of noise on convergence.
    Column \textbf{(b)}  \textbf{(d)} \textbf{(f)}  shows the
    effects of mitigation on the noisy simulation.
    Each simulation was run 10 times (with the same 10 random initial
    separable states) with $1\times 10^5$ shots per each measurement.
    Solid lines represent the mean geometric entanglement, $\GE{}$,
    and the lightly
    shaded colours represent the standard deviation.
    The colour of the lines represents the noise model used for the
    simulation (see legend).
    Red dotted lines show results of classical HOPM\@.
  }
  \label{supfig:all_qubits_with_DN_mitigation_random}
\end{figure*}

\newpage

\end{document}